\newcommand{\citefullpaperA}{Appendix~\ref{appA}}
\newcommand{\citefullpaperB}{Appendix~\ref{appB}}
\newcommand{\ourend}{\ensuremath{\Box}}
\newcommand{\specialitem}{\ensuremath{\circ}}
\newtheorem{theorem}{Theorem}
\newcommand{\ignore}[1]{}
\newcommand{\bsgfopt}{{\sc BSGF-Opt}\xspace}
\newcommand{\sgfopt}{{\sc SGF-Opt}\xspace}
\newcommand{\sgfgopt}{{\sc Subset Cost-Opt}\xspace}
\newcommand{\sgfg}{{\sc Subset Cost}\xspace}
\newcommand{\sgf}{{\sc SGF}\xspace}
\newcommand{\db}{\textsf{\upshape DB}}
\newcommand{\bodyquery}{\xi}
\newcounter{myenum}
\newcommand{\opt}{\textsf{OPT}\xspace}
\newcommand{\gopt}{\textsf{GOPT}\xspace}
\newcommand{\map}{\text{\it map}}
\newcommand{\red}{\text{\it red}}
\newcommand{\mystart}{\text{\it h}}
\newcommand{\mr}{MR\xspace}
\newcommand{\out}{\text{\it out}}
\newcommand{\conforms}{\models}
\newcommand{\camera}[1]{#1}
\newcommand{\cF}{\mathcal{F}}
\newcommand{\cG}{\mathcal{G}}
\newcommand{\cI}{\mathcal{I}}
\newcommand{\cS}{\mathcal{S}}
\newcommand{\cX}{\mathcal{X}}
\newcommand{\gsgf}{{\sc Greedy-SGF}\xspace}
\newcommand{\gbsgf}{{\sc Greedy-BSGF}\xspace}
\newcommand{\va}{\bar a}
\newcommand{\vb}{\bar{b}}
\newcommand{\vc}{\bar{c}}
\newcommand{\vu}{\bar{u}}
\newcommand{\vv}{\bar{v}}
\newcommand{\vt}{\bar{t}}
\renewcommand{\vv}{\bar{v}}
\newcommand{\vw}{\bar{w}}
\newcommand{\vx}{\bar{x}}
\newcommand{\vy}{\bar{y}}
\newcommand{\vz}{\bar{z}}
\DeclareMathOperator{\overlap}{overlap}
\newcommand{\fD}{{\bf D}}
\newcommand{\allvars}{{\bf V}}
\newcommand{\schema}{{\bf S}}
\newcommand{\guarded}{conditional\xspace}
\newcommand{\validate}{\textsf{MSJ}\xspace}
\newcommand{\validatenospace}{\textsf{MSJ}}
\newcommand{\eval}{\textsf{EVAL}\xspace}
\newcommand{\Gumbo}{Gumbo\xspace}
\newcommand{\SEQ}{\textsc{SEQ}\xspace}
\newcommand{\PAR}{\textsc{PAR}\xspace}
\newcommand{\ONEROUND}{\textsc{1-ROUND}\xspace}
\newcommand{\MSJ}{\textsc{GREEDY}\xspace}
\newcommand{\HIVEPAR}{\textsc{HPAR}\xspace}
\newcommand{\HIVEPARSEM}{\textsc{HPARS}\xspace}
\newcommand{\PIGPAR}{\textsc{PPAR}\xspace}
\newcommand{\GCOST}{\MSJ}
\newcommand{\PARUNIT}{\textsc{PARUNIT}\xspace}
\newcommand{\SEQUNIT}{\textsc{SEQUNIT}\xspace}
\newcommand{\semijoin}{\ensuremath{\ltimes}}
\newcommand{\antijoin}{\ensuremath{\rhd}}
\newcommand{\semi}[3]{\ensuremath{\pi_{#1}( #2 \semijoin #3) }}
\newcommand{\OMIT}[1]{}
\DeclarePairedDelimiter{\ceil}{\lceil}{\rceil}
\newcommand{\jonnystilde}{{\raise.17ex\hbox{$\scriptstyle\sim$}}}
\newcommand{\kv}[2]{\ensuremath{\left\langle#1:#2\right\rangle }}
\newcommand{\kvpair}[2]{\ensuremath{\left\langle#1:#2\right\rangle }}
\newcommand{\ASSERT}{\textsc{Assert}\xspace}
\newcommand{\REQUEST}{\textsc{Request}\xspace}
\newcommand{\CONFIRM}{\textsc{Confirm}\xspace}
\newcommand{\DENY}{\textsc{Deny}\xspace}
\newcommand{\REQ}{\textsc{Req}}
\newcommand{\HYP}{\textsc{Out}}
\newcommand{\OUT}{\textsc{Out}}
\newcommand{\RHLIST}{\textsc{Req-List}}
\newcommand{\ASLIST}{\textsc{Assert-List}}
\newcommand{\Outrel}{Z}
\newcommand{\outrel}{\Outrel}
\DeclareMathOperator{\pack}{pack}
\newcommand{\msj}{\mathop{\ltimes\!\cdot}}
\newcommand{\inp}{\text{\it inp}}
\newcommand{\inter}{\text{\it int}}
\newcommand{\Lrc}{\ensuremath{l_r}}
\newcommand{\Lwc}{\ensuremath{l_w}}
\newcommand{\Hrc}{\ensuremath{h_r}}
\newcommand{\Hwc}{\ensuremath{h_w}}
\newcommand{\Trc}{\ensuremath{t}}
\newcommand{\MoutM}{\ensuremath{\widehat{M}_i}}
\newcommand{\mapn}{\ensuremath{m_i}}
\newcommand{\redn}{\ensuremath{r}}
\newcommand{\mbuflim}{\ensuremath{\mathit{buf}_{\map}}}
\newcommand{\rbuflim}{\ensuremath{\mathit{buf}_{\red}}}
\DeclareMathOperator{\mapmerge}{merge_{\map}}
\DeclareMathOperator{\redmerge}{merge_{\red}}
\DeclareMathOperator{\cost}{\text{\it cost}}
\DeclareMathOperator{\gain}{gain}
\newcommand{\ourcost}{\ensuremath{\cost_{\mathit{gumbo}}}\xspace}
\newcommand{\wangscost}{\ensuremath{\cost_{\mathit{wang}}}\xspace}
\newcommand{\SELECT}{{\tt SELECT}}
\newcommand{\FROM}{{\tt FROM}}
\newcommand{\WHERE}{{\tt WHERE}}
\newcommand{\NOT}{{\tt NOT}}
\newcommand{\AND}{{\tt AND}}
\newcommand{\OR}{{\tt OR}}
\begin{document}


 \title{Parallel Evaluation of Multi-Semi-Joins}
\numberofauthors{2} 
\author{
\alignauthor Jonny Daenen\\
       \affaddr{Hasselt University}\\
       \email{jonny.daenen@uhasselt.be}       
\alignauthor Frank Neven\\
       \affaddr{Hasselt University}\\
       \email{frank.neven@uhasselt.be}
\and
\alignauthor  Tony Tan\\
       \affaddr{National Taiwan University}\\
       \email{tonytan@csie.ntu.edu.tw}
\alignauthor  Stijn Vansummeren\\
       \affaddr{Universit\'e Libre de Bruxelles}\\
       \email{stijn.vansummeren@ulb.ac.be}
}
\date{}

\maketitle


\begin{abstract}
{While services such as Amazon AWS make
computing power abundantly available, adding more computing nodes 
can incur high costs in, for instance, pay-as-you-go plans while
not always significantly improving the net running time (aka wall-clock time) of queries.} 
In this work, 
we provide algorithms for
parallel evaluation of SGF queries in
MapReduce that optimize total time, while retaining low net time. 
Not only can SGF queries specify all semi-join reducers, but also more expressive
queries involving disjunction and negation. Since SGF queries can be
seen as Boolean combinations of (potentially nested) semi-joins, we
introduce a novel multi-semi-join (MSJ) MapReduce operator that
enables the evaluation of a set of semi-joins in one job. 
We use \camera{this} operator to obtain parallel query plans for SGF queries that
outvalue sequential plans w.r.t.\ net time and provide additional
optimizations aimed at minimizing total time without severely affecting net
time. Even though the latter optimizations are NP-hard, we present
effective greedy algorithms. {Our experiments, conducted using our
own implementation Gumbo on top of Hadoop, confirm the 
usefulness of parallel query plans, and the effectiveness and scalability of
our optimizations, all with a significant improvement over Pig and Hive.}
\end{abstract}

\section{Introduction}
\label{sec:intro}

The problem of evaluating joins efficiently in massively parallel
systems is an active area of
research (e.g., \cite{Elseidy:2014ui,Okcan:2011kb,Chu:2015de,AfratiSSU13,AfratiU11,KoutrisS11,BKS13,BKS14,AfratiUV15,AfratiJRSU-arxiv14}).
Here, efficiency can be measured in terms of
different criteria, including net time, total time, amount of
communication, resource requirements and the number of synchronization
steps. 
\camera{As parallel systems aim to bring down the net time, 
i.e., the difference between query end and start time,
it is often considered the most important criterium.}
The amount of computing power is no longer an issue through 
the readily availability of services such as Amazon AWS. 
However, in pay-as-you-go plans,
the cost is determined by the total time, that is, 
the aggregate sum of time spent by \emph{all} computing nodes. 
In this paper, we focus on parallel evaluation of queries that 
minimize total time while retaining low net time.
\camera{We consider parallel query plans that exhibit low net times
and exploit commonalities between queries to bring down the total time.}

Semi-joins have played a fundamental role in minimizing
communication costs in traditional database systems through their role
in semi-join reducers~\cite{BernsteinC81,BernsteinG81},
\camera{facilitating the reduction of communication in multi-way join computations.}
In more recent
work, Afrati et al.~\cite{AfratiJRSU-arxiv14} provide an algorithm for
computing $n$-ary joins in MapReduce-style systems in which semi-join
reducers play a central role. Motivated by the general importance of semi-joins, we study the system aspects of implementing semi-joins in a MapReduce context.  In particular, we introduce a multi-semi-join operator \validate that enables the evaluation of a set of semi-joins in one Mapreduce job while reducing resource usage like total time and requirements on cluster
size without sacrificing net
time. We then use this operator to efficiently evaluate Strictly Guarded Fragment (SGF) queries~\cite{FlumFG02,ABN98}. Not only can this query language
specify all semi-join reducers, but also more expressive
queries involving disjunction and negation.


We illustrate our approach by means of a simple example. Consider the following SGF query $Q$:
$$
\begin{array}{l}
\SELECT\ (x,y)\ \FROM\ R(x,y)\\
\WHERE\ \bigl(S(x,y)\ \OR\ S(y,x)\bigr)\ \AND\ T(x,z)
\end{array}
$$
Intuitively, this query asks for all pairs $(x,y)$ in $R$ for which
there exists some $z$ such that (1) $(x,y)$ or $(y,x)$ occurs in $S$
and (2) $(x,z)$ occurs in $T$. To evaluate $Q$ it suffices to compute
the following semi-joins
$$
\begin{array}{l}
X_1 \quad \coloneqq \quad R(x,y)\semijoin S(x,y);\\
X_2 \quad \coloneqq \quad R(x,y)\semijoin S(y,x); \\ 
X_3 \quad \coloneqq \quad R(x,y)\semijoin T(x,z);
\end{array}
$$
store the results in the binary relations $X_1$,
$X_2$, or $X_3$, and subsequently compute $\varphi := (X_1 \cup X_2)
\cap X_3$.
Our multi-semi-join operator $\validate({\cal S})$ (defined in
Section~\ref{algo:validate}) takes a number of semi-join-equations as
input and exploits commonalities between them to optimize
evaluation. In our framework, a possible query plan for query $Q$ is of the form:
\begin{center}
\includegraphics{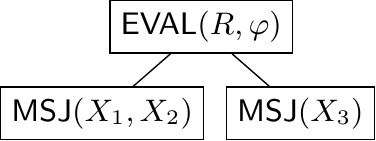}
\end{center}
In this plan, the
calculation of $X_1$ and $X_2$ is combined in a single MapReduce job;
$X_3$ is calculated in a separate job; and \camera{$\eval(R, \varphi)$} is a third
job responsible for computing \camera{the subset of $R$ defined by} $\varphi$.
We provide a cost model to determine the best query plan for SGF
queries. We note that, unlike the simple query $Q$ illustrated here,
SGF queries can be nested in general. In addition, we also show how to generalize the method to the simultaneous evaluation of multiple SGF queries.


    

The contributions of this paper can be summarized as follows:
\begin{compactenum}
\item We introduce the multi-semi-join operator $\msj(\cS)$ to evaluate a set $\cS$ of semi-joins and present a corresponding MapReduce implementation $\validate(\cS)$.

\item We present query plans for basic, that is, unnested, SGF queries and 
{propose an improved version of the cost model presented by}~\cite{wang2013,nykiel2010} for estimating their cost. As computing the optimal plan for a given basic SGF query is NP-hard, we \camera{provide} a fast greedy heuristic \gbsgf.

\item We show that the evaluation of (possibly nested) SGF queries can be reduced to the evaluation of a set of basic SGF queries in an order consistent with the dependencies induced by the former. In this way, computing an optimal plan for a given SGF query (which is NP-hard as well) can be \camera{approximated} \camera{by first determining an optimal subquery ordering, followed by an optimal evaluation of these subqueries. For the former, we present a greedy algorithm called \gsgf.}

\item 

We experimentally assess the effectiveness of \gbsgf and \gsgf
and obtain that, backed by an updated cost model, 
these algorithms successfully manage to bring down 
total times of parallel evaluation, \camera{making it comparable to that
of sequential query plans, while still retaining low net times. 
This is especially} true in the 
presence of commonalities among the atoms of 
queries. \camera{Finally, our system outperforms} Pig and Hive in all aspects 
when it comes to parallel evaluation of SGF queries \camera{and
displays interesting scaling characteristics.}
\end{compactenum}

\medskip

\noindent
{\bf Outline.}
This paper is organized as follows.
We discuss related work in Section~\ref{sec:related-works}.
We introduce the strictly guarded fragment (SGF) queries and discuss MapReduce 
{and the accompanying cost model}
in Section~\ref{sec:prelim}. In Section~\ref{sec:eval}, we consider the evaluation of multi-semi-joins and SGF queries. In Section~\ref{sec:experiments}, we discuss the experimental validation. We conclude in Section~\ref{sec:concl}.

\section{Related work}
\label{sec:related-works}

Recall that first-order logic (FO) queries are equivalent in expressive power
to the relational algebra (RA) and form the core fragment of SQL queries (cf., e.g., \cite{AHV95}).
Guarded-fragment (GF) queries have been studied extensively by the logicians in 1990s and 2000s,
and they emerged from the intensive efforts to obtain a syntactical classification
of FO queries with decidable satisfiability {problems}.
For more details, we refer the reader to the highly influential paper by 
Andreka, van Benthem, and Nemeti~\cite{ABN98}, as well as
survey papers by Gr\"adel and Vardi~\cite{Gradel98,Vardi96}.
In traditional database terms,
GF queries are equivalent in expressive power to semi-join algebra~\cite{LMTV05}.
Closely related are freely acyclic GF queries, which are GF queries
restricted to using only the $\wedge$ operator and guarded existential
quantifiers~\cite{PicalausaFHV14}. 
Flum et al.{}\cite{FlumFG02} introduced the term 
strictly guarded fragment queries for queries of the form $\exists \vy(\alpha\land\varphi)$. That is, guarded fragment queries without Boolean combinations at the outer level. We consider a slight generalization of these queries as explained in Remark~\ref{rem:strict:extension}. 

In general, obtaining the optimal plan in SQL-like query evaluation,
even in centralized computation, is a hard problem~\cite{Chaudhuri98,Ioannidis96}. Classic works by Yannakakis and Bernstein advocate the use of semi-join operations to optimize the evaluation of conjunctive queries~\cite{Yannakakis81,BernsteinC81,BernsteinG81}. A lot of work has been invested to optimize query evaluation in Pig~\cite{GatesDN13,OlstonRSS08}, Hive~\cite{hive} and SparkSQL~\cite{XinRZFSS13} as well as in MapReduce setting in general~\cite{Blanas:2010bj}.
None of them target SGF queries directly.

Tao, Lin and Xiao~\cite{TLX13} studied {\em minimal} MapReduce al\-go\-rithms,
i.e.\ al\-go\-rithms that scale linearly to the number of servers 
in all significant aspects of parallel computation
such as reduce compute time, bits of information received and sent, 
as well as storage space required by each server.
They show that, among many other problems, 
a single semi-join query between two relations can be evaluated by
a one round minimal algorithm.
This is a simpler problem, as a single basic SGF query may involve multiple semi-join queries. 
Afrati et al.\cite{AfratiJRSU-arxiv14} introduced a generalization of Yannakakis' al\-go\-rithm (using semi-joins) to a MapReduce setting.
Note that Yannakakis' algorithm starts with a sequence of semijoin operations, which
is a (nested) SGF query in a very restricted form.

\section{Preliminaries}
\label{sec:prelim}

We start by introducing the necessary concepts and terminologies.
In Section~\ref{sec:prelim:gf}, we define the strictly guarded fragment queries, while we discuss MapReduce in Section~\ref{subsec:MR}
and the cost model in Section~\ref{subsec:cost-model}.


\subsection{Strictly Guarded Fragment Queries}
\label{sec:prelim:gf}

In this section, we define the strictly guarded fragment queries (SGF)
\cite{FlumFG02}, but use a non-standard, SQL-like notation for ease of
readability. 

 We assume given a fixed infinite set $\fD =
\{a,b,\dots\}$ of data values and a fixed collection of relation
symbols $\schema = \{R,S,\dots\}$, disjoint with $\fD$. Every relation
symbol $R \in \schema$ is associated with a natural number called the
\emph{arity of $R$}. An expression of the form $R(\va)$ with $R$ a
relation symbol of arity $n$ and $\va \in \fD^n$ is called a
\emph{fact}.  A database $\db$ is then a finite set of facts. Hence, we
write $R(\va) \in \db$ to denote that a tuple $\va$ belongs to the $R$
relation in
$\db$.




We also assume given a fixed infinite set $\allvars = \{x,y,\dots\}$
of variables, disjoint with $\fD$ and $\schema$.  A \emph{term} is
either a data value or a variable. An \emph{atom} is an expression of
the form $R(t_1,\dots,t_n)$ with $R$ a relation symbol of arity $n$
and each of the $t_i$ a term, $i \in [1,n]$.  (Note that every fact is
also an atom.)  A \emph{basic strictly guarded fragment (BSGF) query}
(or just a basic query for short) is an expression of the form
\begin{equation}
  \label{eq:gf}
  \Outrel \coloneqq \SELECT \ \vx\ \FROM \ R(\vt)\ [ \ \WHERE \ C \ ];
\end{equation}
where $\vx$ is a sequence of variables that all occur in the atom
$R(\vt)$, and the $\WHERE\ C$ clause is optional. If it occurs, $C$
must be a Boolean combination of atoms. Furthermore, to ensure that
queries belong to the guarded fragment, we require that for each pair
of distinct atoms $S(\vu)$ and $T(\vv)$ in $C$ it must hold that all
variables in $\vu \cap \vv$ also occur in $\vt$.  (See also
Remark~\ref{rem:strict:extension} below.) The atom $R(\vt)$ is called the
\emph{guard} of the query, while the atoms occurring in $C$ are called
the \emph{\guarded atoms}. We interpret $Z$ as the output relation of
the query.



On a database $\db$, the BSGF query (\ref{eq:gf}) defines a new
relation $\Outrel$ containing all tuples $\va$ for which there is a
substitution $\sigma$ for the variables occurring in $\vt$ such that
$\sigma(\vx)=\va$, $R(\sigma(\vt)) \in \db$, and $C$ evaluates to true
in $\db$ under substitution $\sigma$. Here, the evaluation of $C$ in
$\db$ under $\sigma$ is defined by recursion on the structure of
$C$. If $C$ is $C_1 \text{\texttt{ OR }} C_2$, $C_1 \text{\texttt{ AND
  }} C_2$, or $\text{\texttt{NOT }} C_1$, the semantics is the usual
boolean interpretation. If $C$ is an atom $T(\vv)$ then $C$ evaluates
to true if $\sigma(\vt) \in R(\vt)\semijoin T(\vv)$, i.e., if there
exists a $T$-atom in $\db$ that equals $R(\sigma(\vt))$ on those
positions where $R(\vt)$ and $T(\vv)$ share variables. 

\begin{example}
The intersection $\Outrel_1 \coloneqq R \cap S$ and the difference $\Outrel_2 \coloneqq R - S$ 
between two relations $R$ and $S$ are expressed as follows:
\begin{eqnarray*}
\Outrel_1 & \coloneqq  & \SELECT \ \vx
\ \FROM \ R(\vx)
\ \WHERE \ S(\vx);
\\
\Outrel_2 & \coloneqq  & \SELECT \ \vx
\ \FROM \ R(\vx)
\ \WHERE \ \NOT \ S(\vx);
\end{eqnarray*}
The semijoin $\Outrel_3 = R(\vx,\vy)\semijoin S(\vy,\vz)$
and the antijoin $\Outrel_4 = R(\vx,\vy)\antijoin S(\vy,\vz)$ are expressed as follows:
\begin{eqnarray*}
\Outrel_3 & \coloneqq  & \SELECT \ \vx,\vy
\ \FROM \ R(\vx,\vy)
\ \WHERE \ S(\vy,\vz);
\\
\Outrel_4 & \coloneqq  & \SELECT \ \vx,\vy
\ \FROM \ R(\vx,\vy)
\ \WHERE \ \NOT \ S(\vy,\vz);
\end{eqnarray*}
The following BSGF query selects all the
pairs $(x,y)$ for which $(x,y,4)$ occurs in  $R$ and either $(1,x)$ or
$(y,10)$ is in $S$, but not both:
\begin{eqnarray*}
\Outrel_5 & \coloneqq  & \SELECT \ (x,y)
\ \FROM \ R(x,y,4)
\\
& & \WHERE \ (S(1,x) \ \AND \ \NOT \ S(y,10))
\\ 
& & \OR \ (\NOT \ S(1,x) \ \AND \ S(y,10));
\end{eqnarray*}
Finally, the traditional star semi-join between $R(x_1,\dots,x_n)$ and
relations $S_i(x_i,y_i)$, for $i\in[1,n]$, is expressed as
follows: 
\begin{eqnarray*}
\Outrel_6 & \coloneqq & \SELECT \ (x_1,\dots,x_n) \ \FROM \ R(x_1,\ldots,x_n)\\
  & & \WHERE \ S(x_1,y_1) \ \AND \dots \ \AND\ S(x_n,y_n); \qquad\hfill \ourend
\end{eqnarray*}
\end{example}



A \emph{strictly guarded fragment (SGF) query} is a collection
of BSGFs of the form
\camera{$
\Outrel_1  \coloneqq  \bodyquery_1; \ldots; \Outrel_n  \coloneqq \bodyquery_n;
$}
where each $\bodyquery_i$ is a BSGF that can mention any of the
predicates $\Outrel_j$ with $j<i$.  On a database $\db$, the SGF query
then defines a new relation $\Outrel_n$ where every occurrence of
$\Outrel_i$ is defined by evaluating $\bodyquery_i$.  

\begin{example} Let {\bf Amaz}, {\bf BN}, 
and {\bf BD} be
relations containing tuples (title, author, rating) corresponding to
the books found at Amazon, Barnes and Noble, and Book Depository,
respectively. Let \textbf{Upcoming} contain tuples (new\-title, author) of
upcoming books.  \camera{The following query selects} 
all the upcoming books (newtitle, author) of
authors that have not yet received a ``bad'' rating \camera{for the same title at} all three
book retailers; $\Outrel_2$ is the
output relation:
\begin{eqnarray*}
  \Outrel_1 & \coloneqq  & \SELECT \ \textrm{aut}
  \ \FROM \ \textbf{Amaz}(\textrm{ttl},\textrm{aut},\textsf{"bad"})
  \\
  & & \WHERE 
  \ \textbf{BN}(\textrm{ttl},\textrm{aut}, \textsf{"bad"})\ \AND
  \ \textbf{BD}(\textrm{ttl},\textrm{aut},\textsf{"bad"});\\
  \Outrel_2 & \coloneqq & \SELECT \ (\textrm{new},\textrm{aut})\ \FROM\
  \textbf{Upcoming}(\textrm{new},\textrm{aut}) \\
  & & \WHERE\ \NOT\ \Outrel_1(\textrm{aut});
\end{eqnarray*}
Note that this query cannot be written as a basic SGF query, since the
atoms in the query computing $\Outrel_1$ must share the $\textrm{ttl}$
variable, which is not \camera{present} in the guard of the query computing $\Outrel_2$.
\hfill $\ourend$
\end{example}


\begin{remark}\label{rem:strict:extension}
The syntax we use here differs from
the traditional syntax of the Guarded Fragment~\cite{FlumFG02}, and
is actually closer in spirit to join trees for acyclic conjunctive
queries~\cite{BernsteinC81,DBLP:journals/is/BernsteinG81},
although we do allow disjunction and negation in the where
clause. 
In the traditional syntax, a projection in the guarded fragment is
only allowed in the form $\exists \vw R(\vx) \wedge \varphi(\vz)$
where all variables in $\vz$ must occur in $\vx$. One can obtain a
query in the traditional syntax of the guarded fragment from our
syntax by adding extra projections for the atoms in $C$. For example,
$$\SELECT \ x\ \FROM \ R(x,y)\ \WHERE\ S(x,z_1)\ \AND\ \NOT\ S(y,z_2)$$
becomes $\exists y (R(x,y) \wedge (\exists z_1) S(x,z_1) \wedge 
\neg (\exists z_2) S(y,z_2))$. We note that this transformation increases the nesting depth of the query. \hfill $\ourend$
\end{remark}

\subsection{MapReduce}
\label{subsec:MR}

\newcommand{\hdfs}{HDFS\xspace}

We briefly recall the Map/Reduce model of computation (\mr for short),
and its execution in the open-source Hadoop framework~\cite{Dean:2008:MSD:1327452.1327492,White15}. 
An \emph{\mr job} is a pair $(\mu,\rho)$ of functions, where $\mu$ is
called the map and $\rho$ the reduce function. The
execution of an \mr job on an input dataset $I$ proceeds in two
stages. In the first stage, called the \emph{map stage}, each fact $f
\in I$ is processed by $\mu$, generating a collection $\mu(f)$ of
key-value pairs of the form $\kv{k}{v}$. The total collection
$\bigcup_{f \in I} \mu(f)$ of key-value pairs generated during the map
phase is then grouped on the key, resulting in a number of groups, say
$\kv{k_1}{V_1}, \dots, \kv{k_n}{V_n}$ where each $V_i$ is a set of
values. Each group $\kv{k_i}{V_i}$ is then processed by the reduce
function $\rho$ resulting again in a collection of key-value pairs
per group. The
total collection $\bigcup_{i} \rho(\kv{k_i}{V_i})$ is the output of
the \mr job.

An \emph{\mr program} is a directed acyclic graph
of \mr jobs, where an edge from job $(\mu,\rho) \to (\mu',\rho')$
indicates that $(\mu', \rho')$ operates on the output of
$(\mu,\rho)$. 
We refer to the length of the longest path in an \mr
program as the \emph{number of rounds} of the program.



\subsection{Cost Model for MapReduce}
\label{subsec:cost-model}

\begin{figure}[tbp]
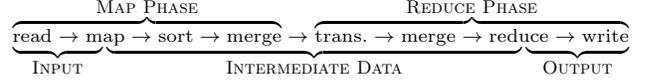

  \centering
  \scriptsize
  \[ \rlap{$\, \underbrace{\phantom{\text{reap} \rightarrow \text{m}}}_{\textsc{\scriptsize Input}} \underbrace{\phantom{\text{ap} \rightarrow \text{sort} \rightarrow \text{merge} \rightarrow \text{trans.} \rightarrow \text{merge} \rightarrow \text{red}}}_{\textsc{\scriptsize Intermediate Data
  }}\underbrace{\phantom{\text{uce} \rightarrow \text{priti}}}_{\textsc{\scriptsize Output}}$}\overbrace{\text{read} \rightarrow \text{map} \rightarrow \text{sort} \rightarrow \text{merge}}^{\textsc{\scriptsize Map Phase}} \rightarrow \overbrace{\text{trans.} \rightarrow \text{merge} \rightarrow \text{reduce} \rightarrow \text{write}}^{\textsc{\scriptsize Reduce Phase}}
  \]
  \caption{A depiction of the inner workings of Hadoop MR.}
  \label{fig:pipeline}
\end{figure}



\camera{As our aim is to reduce the total cost of parallel query plans,
we need a cost model that estimates this metric for a given \mr job.}
We briefly touch upon a cost model for analyzing the I/O complexity of an MR job based on the one introduced in~\cite{wang2013,nykiel2010} but with a  distinctive difference. The adaptation we introduce, and that is elaborated upon below, takes into account that the map function may have a different input/output ratio for different parts of the input data. 
While conceptually an MR job consists of only the map and reduce stage, its inner workings are more intricate.
Figure~\ref{fig:pipeline} summarizes the steps in the execution of an \mr job.
See~\cite[Figure~7-4, Chapter~7]{White15} for more details. 
The map phase involves 
(\textit{i}) applying the map function on the input;
(\textit{ii}) sorting and merging the {\em local} key-value pairs produced by the map function,
and (\textit{iii}) writing the result to local disk.

Let $\cI_1\cup\cdots\cup\cI_k$ denote 
the partition of the input tuples such that
the mapper behaves \emph{uniformly}\footnote{
Uniform behaviour means that for every $I_i$, each input tuple in $I_i$ is subjected to the same map function
and generates the same number of key-value pairs. In general, a partition is a subset of an input relation.}
on every data item in $\cI_i$. 
Let $N_i$ be the size (in MB) of $\cI_i$, and let $M_i$ be the size (in MB) of the intermediate data
output by the mapper on $\cI_i$.
The cost of the map phase on $\cI_i$ is:
\begin{eqnarray*}
\cost_{\map}(N_i,M_i) & = & \Hrc N_{i} + \mapmerge(M_{i}) + \Lwc M_{i},
\end{eqnarray*}
where $\mapmerge(M_i)$, denoting the cost of sort and merge in the map stage,
is expressed by
\begin{eqnarray*}
\mapmerge(M_i) & = & (\Lrc + \Lwc) M_i   \log_{D} 
\ceil[\bigg]{\dfrac{\nicefrac{(M_i + \MoutM)}{\mapn}}{\mbuflim}}.
\end{eqnarray*}
See Table~\ref{tab:costconstants} for the meaning 
of the variables $\Hrc$, $\Lwc$, $\Lrc$, $\Lwc$, $D$, $\MoutM$, $\mapn$, and $\mbuflim$.\footnote{In Hadoop 
each tuple output by the map function requires 16 bytes of metadata.}
The total cost incurred in the map phase equals the sum
\begin{equation}
\label{eq:map-cost}
\sum_{i=1}^k \ \cost_{\map}(N_i,M_i).
\end{equation}
{Note that the cost model in~\cite{wang2013,nykiel2010} defines
the total cost incurred in the map phase as}
\begin{equation}
\label{eq:map-cost-wang}
\cost_{\map}\left(\sum_{i=1}^k N_i,\sum_{i=1}^k M_i\right).
\end{equation}
The latter is not always accurate. 
Indeed, consider for instance an MR job whose input consists of two relations $R$ and $S$
where the map function outputs many key-value pairs for each tuple in $R$ \camera{and at most one} key-value pair for each tuple in $S$, e.g., because of filtering. This difference in map output may lead
to a non-proportional contribution of both \camera{input relations} 
 to the total cost.
Hence, as shown by Equation~\eqref{eq:map-cost}, 
we opt to consider different inputs separately.
This cannot be captured by map cost calculation of Equation~\eqref{eq:map-cost-wang},
as it considers the \emph{global} average map output size in the calculation
of the merge cost.
In Section~\ref{sec:experiments}, we illustrate this problem
by means of an experiment that confirms the effectiveness 
of the proposed adjustment.


To analyze the cost in the reduce phase,
let $M = \sum_{i=1}^k M_i$.
The reduce stage involves 
(\textit{i}) transferring the intermediate data (i.e., the output of the map function)
to the correct reducer,
(\textit{ii}) merging the key-value pairs locally for each reducer,
(\textit{iii}) applying the reduce function, and 
(\textit{iv}) writing the output to hdfs.
Its cost will be
\begin{eqnarray*}
\cost_{\red}(M,K) & = & \Trc M  +
 \redmerge(M)+ 
 \Hwc K,
\end{eqnarray*}
where $K$ is the size of the output of the reduce function (in MB).
The cost of merging equals
\begin{eqnarray*}
\redmerge(M) &=  (\Lrc + \Lwc) M  \log_{D} \ceil[\bigg]{\dfrac{\nicefrac{M}{\redn}}{\rbuflim}}.
\end{eqnarray*}
The total cost of an MR job equals the sum
$$
\cost_{\mystart} + \sum_{i=1}^k \cost_{\map}(N_i,M_i) + 
\cost_{\red}(M,K),
$$
where $\cost_{\mystart}$ is 
the overhead cost of starting an MR job. 


\begin{table}
\centering
\begin{tabular}{|c|l|}
\hline
$\Lrc$    & local disk read cost (per MB)\\\hline	
$\Lwc$    & local disk write cost (per MB)\\\hline
$\Hrc$    & hdfs read cost (per MB)\\\hline		
$\Hwc$    & hdfs write cost (per MB)\\\hline	
$\Trc$    & transfer cost  (per MB)\\\hline	\hline	
$\MoutM$    & map output meta-data for $\cI_i$ (in MB) \\\hline
$\mapn$   & number of mappers for $\cI_i$\\\hline
$\redn$   & number of reducers		\\\hline\hline
$D$ & external sort merge factor\\\hline
$\mbuflim$ & map task buffer limit (in MB) \\\hline
$\rbuflim$ & reduce task buffer limit (in MB) \\\hline	
\end{tabular}
\caption{Description of constants used in the cost model.
}
\label{tab:costconstants}
\end{table}


\section{Evaluating multi-semi-join and SGF Queries}
\label{sec:eval}
In this section, we describe how SGF queries can be evaluated. 
We start by introducing some necessary building blocks in
Sections~\ref{algo:semijoin} to \ref{algo:boolean}, and describe 
the evaluation of BSGF queries and multiple BSGF queries in Section~\ref{sec:basic} and \ref{sec:mul-bsgf}, respectively. These are then generalized to 
the full fragment of SGF queries in Section~\ref{sec:sgf-queries}
and \ref{sec:mul-sgf}.

First, we introduce some additional notation. 
We say that a tuple $\va=(a_1,\ldots,a_n) \in \fD^n$ of $n$
data values \emph{conforms} to a vector $\vt = (t_1,\ldots,t_n)$ of
terms, if
\begin{compactenum}
\item $\forall i,j \in [1,n]$, $t_i = t_j$ implies $a_i = a_j$; and,
\item $\forall i \in [1,n]$ if $t_i \in \fD$, then $t_i = a_i$.  
\end{compactenum}
For
instance, $(1,2,1,3)$ conforms to $(x,2,x,y)$. 
Likewise, a fact
$T(\va)$ conforms to an atom $U(\vt)$ if $T = U$ and $\va$ conforms to
$\vt$. We write $T(\va) \conforms U(\vt)$ to denote that $T(\va)$
conforms to $U(\vt)$.
If $f= R(\va)$ is a fact conforming to an atom $\alpha=R(\vt)$ and $\vx$ is a sequence of variables that occur in $\vt$, then the projection
$\pi_{\alpha;\vx} (f)$ of $f$ onto  $\vx$ is the tuple $\vb$ obtained by projecting $\va$ on the coordinates in $\vx$. For instance, 
let $f=R(1,2,1,3)$ and $\alpha = R(x,y,x,z)$. Then,
$R(1,2,1,3) \conforms R(x,y,x,z)$ and hence $\pi_{\alpha;x,z} (f) = (1,3)$.


\subsection{Evaluating One Semi-Join}
\label{algo:semijoin}
As a warm-up, let us explain how single semi-joins can be evaluated in
MR. A single semi-join is a query of the form
\begin{equation}
\label{query:bsgf}
\outrel := \SELECT\ \vw\ \FROM\ \alpha\ \WHERE\ \kappa; 
\end{equation}
where both $\alpha$ and $\kappa$ are atoms.  For notational convenience, we
will denote this query simply by $\semi{\vw}{\alpha}{\kappa}$.

To evaluate~\eqref{query:bsgf}, one can use the following one round 
repartition join~\cite{Blanas:2010bj}.  The mapper distinguishes between guard facts 
(i.e.,\ facts in $\db$ conforming to $\alpha$) and \guarded facts
(i.e.,\ facts in $\db$ conforming to $\kappa$).  Specifically, let
$\vz$ be the join key, i.e., those variables occurring in both
$\alpha$ and $\kappa$.  For each guard fact $f$ such that $f \conforms
\alpha $, the mapper emits the key-value pair
$\kv{\pi_{\alpha;\vz}(f)}{[\REQ\, \kappa; \OUT\, \pi_{\alpha;\vw}(f)]}$. Intuitively,
this pair is a ``message'' sent by guard fact $f$ to request whether a
conditional fact $g \conforms \kappa$ {with $\pi_{\kappa;\vz}(g) =
\pi_{\alpha;\vz}(f)$} exists in the database, stating that if such a
conditional fact exists, the tuple $\pi_{\alpha;\vw}(f)$ should be
output. Conversely, for each \guarded fact $g \conforms \kappa$, the
mapper emits a message of the form {$\kv{\pi_{\kappa;\vz}(g)}{[\ASSERT\,
  \kappa]}$}, asserting the existence of a $\kappa$-conforming fact in
the database with join key $\pi_{\kappa;\vz}(g)$.  On input $\kv{\vb}{V}$,
the reducer outputs all tuples $\va$ to relation $\outrel$
for which $[\REQ\,\kappa; \OUT\,\va] \in V$, provided that $V$ contains at least one assert message.

\begin{example}
Consider the query $\Outrel \coloneqq \pi_x (R(x,z)\semijoin S(z,y))$ and let $I$ contain the facts $\{R(1,2),R(4,5),S(2,3)\}$. Then the mapper emits key-value pairs
\camera{ \kv{2}{[\REQ\,S(z,y);\OUT\, 1]},
\kv{5}{[\REQ\,S(z,y);\OUT\, 4]} and,
\kv{2}{[\ASSERT\, S(z,y)]},
which after reshuffling result in groups
\kv{5}{\{[\REQ\, S(z,y);\OUT\, 4]\}} and,
\kv{2}{\{[\REQ\, S(z,y);\OUT\, 1],[\ASSERT\, S(z,y)]\}}.}
Only the reducer processing the second group produces an output, namely the fact $\Outrel(1)$.
\qquad\hfill\ourend
\end{example}

\paragraph*{Cost Analysis} 

To compare the cost of separate and combined 
evaluation of multiple semi-joins in the next section, 
we first illustrate how to analyze the cost of 
evaluating a single semi-join using the cost model
described above.
Hereto, let $|\alpha|$ and $|\kappa|$ denote the total size 
of all facts that conform to $\alpha$ and $\kappa$, respectively.
Five values are required for estimating the total cost:
$N_1, N_2, M_1, M_2$ and $K$. We can now choose
$M_1=|\alpha|$ and $M_2=|\kappa|$. For simplicity, 
we assume that key-value pairs output by the mapper 
have the same size as their corresponding input tuples, i.e., 
$N_1=M_1$ and $N_2=M_2$.\footnote{\camera{Gumbo
uses sampling to estimate $M_i$  (cf.\ Section~\ref{sec:gumbo}).}} 
Finally, the output size $K$ 
can be approximated by its upper bound $N_1$.
Correct values for meta-data size and number of mappers 
can be derived from the number of input records 
and the system settings.

\subsection{Evaluating a Collection of Semi-Joins}
\label{algo:validate}
Since a \camera{BSGF} query is \camera{essentially} a Boolean combination of semi-joins, it can be
computed by first evaluating all semi-joins followed by the evaluation
of the Boolean combination. In the present section, we introduce a
single-job MR program \validate that evaluates a set of semi-joins in
parallel. In the next section we introduce the single-job MR program
\eval to evaluate the Boolean combination.

We introduce a unary multi-semi-join operator $\msj({\cal S})$ that
takes as input a set of equations ${\cal S}=\{X_1 \coloneqq
\semi{\vx_1}{\alpha_1}{\kappa_1}, \dots, X_n \coloneqq
\semi{\vx_n}{\alpha_n}{\kappa_n}\}$. It is required that the $X_i$ are
all pairwise distinct and that they do not occur in any of the
right-hand sides. The semantics is straightforward: the operator
computes every semi-join $\semi{\vx_i}{\alpha_i}{\kappa_i}$ in ${\cal
  S}$ and stores the result in the corresponding output relation
$X_i$.

We now expand the MR job described in Section~\ref{algo:semijoin} into a job that computes $\msj({\cal S})$ by evaluating all semi-joins in parallel. Let $\vz_i$ be the join key of
semi-join $\semi{\vx_i}{\alpha_i}{\kappa_i}$.
Algorithm~\ref{alg:validate} shows the single MR job $\validate({\cal S})$ that
evaluates all $n$ semi-joins at once. 
More specifically, $\validate$
simulates the repartition join of Section~\ref{algo:semijoin}, but
outputs request messages for \emph{all} the guard facts 
at once (i.e.,
those facts conforming to one of the $\alpha_i$ for $i \in [1,n]$).
Similarly, assert messages are generated simultaneously for all of the \guarded facts (i.e., those facts conforming to one of the $\kappa_i$ for $i \in
[1,n]$). The reducer then reconciles the messages concerning the same $\kappa_i$. That is, on input $\kvpair{\vb}{V}$, the reducer outputs
the tuple $\va$ to relation $X_i$ for which {$[\REQ\,(\kappa,i); \OUT\,\va] \in V$}, provided that $V$ contains at least one message
of the form $[\ASSERT\, \kappa]$. The output therefore consists of the relations $X_1,\dots, X_n$, with each $X_i$ containing the result of evaluating
$\semi{\vx_i}{\alpha_i}{\kappa_i}$. 

Combining the evaluation of a collection of semi-joins into a single \validate\ job avoids the overhead of starting multiple jobs, reads every input relation only once, and can reduce the amount of communication by packing similar
messages together (cf.\ Section~\ref{sec:gumbo}). At the
same time, grouping all semi-joins together can potentially increase
the {average load of map and/or reduce tasks, which directly leads to
  an increased net time}. {These trade-offs are made more apparent in the following analysis and are taken into account in the algorithm \gbsgf introduced in Section~\ref{sec:basic}.}


\begin{algorithm}[t]
\caption{\validatenospace($X_1 \coloneqq \semi{\vx_1}{\alpha_1}{\kappa_1}, \dots,
   X_n \coloneqq \semi{\vx_n}{\alpha_n}{\kappa_n}$)
  }
\label{alg:validate}
\begin{algorithmic}[1]
\Function{\textsc{Map}(\textsf{Fact} $f$)}{}
\State buff = [\,]
\For{every $i$ such that $f \conforms \alpha_i$}
\State buff \textsf{+=} {$\kv{\pi_{\alpha_i;\vz_i}(f)}{[\REQ\, (\kappa_i,i);\OUT\, \pi_{\alpha_i;\vx_i}(f)]}$}
\EndFor
\For{every $i$ such that $f \conforms \kappa_i$}
\State buff \textsf{+=} {$\kv{\pi_{\kappa_i;\vz_i}(f)}{[\ASSERT\, \kappa_i]}$}
\EndFor
\State emit buffer
\EndFunction
\Statex 
\Function{\textsc{Reduce}$(\kv{k}{V})$}{} 
\ForAll{ $[\REQ\, \kappa_i;\OUT\,\va ]$ in $V$} \If{$V$
  contains $[\ASSERT\, \kappa_i]$} \State add $\va$ to $X_i$
\EndIf
\EndFor
\EndFunction
\end{algorithmic}
\end{algorithm}


\ignore{
\smallskip\noindent {\bf Packing.} Observe that the $\validate$ mapper
may emit multiple messages per fact $f$. A useful optimization in this
respect that reduces network communication is to buffer all messages
emitted per fact, and to \emph{pack} the messages as follows. First,
let us slightly generalize the format of a \final{request} message to be
$[\REQUEST\ \overline{(\va,\kappa)}]$ where $\overline{(\va,\kappa)}$
denotes a sequence of $(\va,\kappa)$ pairs with $\va$ a tuple of data
values and $\kappa$ an atom. Likewise, let us generalize the format of
\final{assert} messages to be $[\ASSERT\ \overline{\kappa}]$ with
$\overline{\kappa}$ a list of atoms. Then, for two messages with the same
key, $m_1 = \kvpair{\vc}{[\text{lab}; \text{list}_1]}, $ and $m_2 =
\kvpair{\vc}{[\text{lab};\text{list}_2]}, $ that also have the same
message label $\textit{lab}$ (either $\REQUEST$ or $\ASSERT$) and
where  $\text{list}_1$ and $\text{list}_2$ are corresponding lists, we
define
\[
\pack(m_1,m_2) = \kvpair{\vc}{[\text{lab}; \text{list}_1 \cup \text{list}_2 ]}.
\]
Packing reduces {both the number of output tuples and} 
the total output size of the mapper. In particular, 
consider a fact $f$ conforming to two conditional atoms $\kappa_i$ and
$\kappa_j$. Without packing, two \final{assert} messages would be
emitted. If, however, the join key $\vz_i$ of $\kappa_i$ with
$\alpha_i$ is the same as the join key $\vz_j$ of $\kappa_j$ with
$\alpha_j$, then the key of both emitted messages is $\pi_{\vz_i}(f)$
and we can pack both messages in a single message
$\kv{\pi_{\vz_i}(f)}{\ASSERT\, \kappa_i, \kappa_j}$, reducing 
{the number of tuples that need to be sorted, as well as }the size
of the intermediate result generated by the map phase, and hence the
size of the data communicated during the shuffle phase. The same
reasoning holds when $f$ conforms to two guard atoms $\alpha_i$ and
$\alpha_j$ and multiple \final{request} messages are normally emitted. If the
join keys $\vz_i = \vz_j$ \jonny{I think this should be the values of the join key, i.e., $\pi_{\vz_i}(f) = \pi_{\vz_j}(f)$} are equal, again a single \final{request} message
of the form $\kv{\pi_{\vz_i}(f)}{[\REQUEST\, \pi_{\vx_i}(f),
  \kappa_i,\, \pi_{\vx_j}(f), \kappa_j]}$ suffices.
Of course, if message packing is enabled, the reducer needs to unpack
its messages before processing them.
} 

\paragraph*{Cost Analysis}
{
\camera{
Let all $\kappa_i$'s be different atoms and $\alpha_1=\cdots=\alpha_n=\alpha$.}
\camera{A similar analysis can be performed for other comparable scenarios.}
As before, 
we assume that the size of the key-value pair is
the same as the size of the conforming fact, and
\camera{all tuples conform to their corresponding atom.}
The cost of $\validate(\cS)$, denoted by $\cost(\cS)$, equals 
\begin{multline}
\label{eq:sce2-group}
\cost_{\mystart} + \cost_{\map}(|\alpha|,n|\alpha|)+\sum_{i=1}^n \cost_{\map}(|\kappa_i|,|\kappa_i|) \\ + \cost_{\red}\Big( n|\alpha|+\sum_{i=1}^n |\kappa_i|,\sum_{i=1}^n |X_i|\Big),
\end{multline}
where $|X_i|$ is the size of the output relation $X_i$.
If we evaluate each $X_i$ in a separate MR job,
the total cost is:
\begin{equation}
\label{eq:sce2-sep}
\sum_{i=1}^n
\left(\!
\begin{array}{l}
\cost_{\mystart} + \cost_{\map}(|\alpha|,|\alpha|)
+\cost_{\map}(|\kappa_i|,|\kappa_i|)
\\
+\cost_{\red}(|\alpha|+|\kappa_i|,|X_i|)
\end{array}\!
\right)
\end{equation}
\camera{So, single-job evaluation of all $X_i$'s is
more efficient than separate evaluation iff
Equation~\eqref{eq:sce2-group} is less than
Equation~\eqref{eq:sce2-sep}.}}



\subsection{Evaluating Boolean Combinations}
\label{algo:boolean}


Let \camera{$X_0,X_1, \dots X_n$} be relations with the same arity and let
$\varphi$ be a Boolean formula over $X_1, \dots X_n$. 
It is straightforward to evaluate \camera{$X_0 \land \varphi$} in a single MR job: on each fact $X_i(\va)$, 
the mapper emits $\kv{\va}{i}$. The reducer hence receives pairs $\kv{\va}{V}$ with $V$ containing all the indices $i$ for which $\va \in X_i$, and
outputs $\va$ only if the Boolean formula, obtained from \camera{$X_0 \land \varphi$} by replacing
every $X_i$ with {\em true} if $i\in V$ and {\em false} otherwise, evaluates to true. 
For instance, if $\varphi = X_1 \land X_2 \land \neg X_3$, it will emit $\va$ only if $V$ contains \camera{0}, $1$ and $2$ but not $3$.

We denote this MR job as \camera{$\eval(X_0, \varphi)$}. {We emphasize that multiple
Boolean formulas \camera{$Y_1 \land \varphi_1,\ldots,Y_n \land \varphi_n$} with distinct sets of variables can be evaluated 
\camera{in one MR job}
which we denote as \camera{$\eval(Y_1, \varphi_1,\ldots, Y_n, \varphi_n)$}.}

\paragraph*{Cost Analysis}
Let $|X_i|$ be the size of relation $X_i$.
Then, when $|\varphi|$ is the size of the output,
$\cost(\camera{\eval(X_0,\varphi)}) $ equals
\begin{equation}
\cost_{\mystart} + \sum_{i=0}^n \cost_{\map}(|X_i|,|X_i|)
+ \cost_{\red}\big(\sum_{i=0}^n |X_i|,|\varphi|\big).
\end{equation}

\subsection{Evaluating BSGF Queries}
\label{sec:basic}
\label{sec:costmodel}

\begin{figure}
  \centering
  \begin{subfigure}[b]{0.5\linewidth}
    \centering
\scalebox{0.8}{
\includegraphics{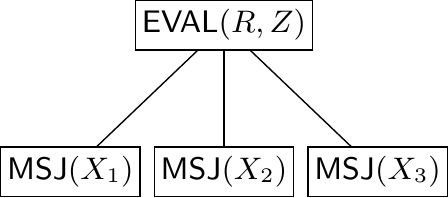}
}
  \caption{}
  \end{subfigure}

   \begin{subfigure}[b]{0.45\linewidth}
     \centering
\scalebox{0.8}{
\includegraphics{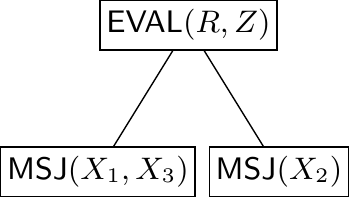}
}
  \caption{}
  \end{subfigure}
  \qquad
 \begin{subfigure}[b]{0.45\linewidth}
   \centering
\scalebox{0.8}{
\includegraphics{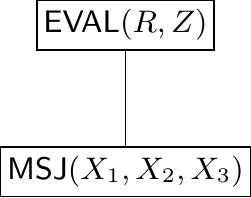}
}
  \caption{}
  \end{subfigure}

  \caption{Different possible query plans for the query given in Example~\ref{ex-plan-query}. \camera{Here, $X_1\coloneqq R(x,y) \semijoin S(x,z)$, $X_2\coloneqq R(x,y) \semijoin T(y)$, $X_3\coloneqq R(x,y) \semijoin  U(x)$ and $Z \coloneqq X_1 \land (X_2 \lor \neg X_3)$; trivial projections are omitted.} 
  }
  \label{fig:plans}
\end{figure}
We now have the building blocks to discuss the evaluation of basic
queries.  Consider the following basic query $Q$: 
\[ \Outrel \coloneqq \SELECT \ \vw\ \FROM \ R(\vt)\ \WHERE \ C. \qquad
\] Here, $C$ is a Boolean combination of \guarded atoms
$\kappa_i$, for $i\in [1,n]$, that can only share variables occurring
in $\vt$. 
Note that it is implicit that $\kappa_1,\ldots,\kappa_n$
are all different atoms.
Furthermore, let ${\cal S}$ be
the set of equations $\{X_1 \coloneqq \semi{\vw}{R(\vt)}{\kappa_1},
\dots, X_n \coloneqq \semi{\vw}{R(\vt)}{\kappa_n}\}$ and let $\varphi_C$
be the Boolean formula obtained from $C$ by replacing every
conditional atom $\kappa_i$ by $X_i$.

Then, for every partition $\{\cS_1,\ldots,\cS_p\}$ of $\cS$, the following MR program 
computes $Q$: 
\begin{center}
\includegraphics{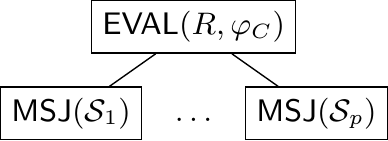}
\end{center}
We refer to any such program as a \emph{basic MR program for $Q$}.
Notice that all \validate jobs can be executed in parallel. 
So, the above program consists in fact of two rounds,
but note that there are $p+1$ MR jobs in total:
one for each $\validate(\cS_i)$, and one for $\eval(R,\varphi_C)$.



\begin{example}
\label{ex-plan-query}
\camera{Figure~\ref{fig:plans} shows three alternative basic MR programs for the
following query:}
\begin{eqnarray}
  \label{eq:ex-plan-query}
  \Outrel & \coloneqq & \SELECT \ x,y\ \FROM \ R(x,y) \nonumber \\ && \WHERE\ S(x,z)\ \AND\ (T(y)\ \OR\
\NOT\ U(x))
\end{eqnarray}
In alternative (a), all semijoins $X_1,X_2,X_3$ 
are evaluated as separate jobs. 
In alternative (b), $X_1$ and $X_3$ are computed in one job,
while $X_2$ is computed separately. 
In alternative (c), all semijoins $X_1,X_2,X_3$
are computed in a single job.~\hfill$\ourend$
\end{example}

\paragraph*{Cost Analysis}
When $\cS$ is partitioned into $\cS_1\cup\cdots\cup\cS_p$,
the cost of the MR program is:
\begin{equation}
\label{eq:cost-bsgf}
\camera{\cost(\eval(R,\varphi_C))} + 
\sum_{i=1}^p \cost(\cS_i),
\end{equation}
{where the cost of $\cost(\cS_i)$ 
is as in Equation~(\ref{eq:sce2-group}).}

\paragraph*{Computing the Optimal Partition}
By \bsgfopt we denote the problem that takes a BSGF query $Q$ as above and
computes a partition $\cS_1\cup \cdots\cup\cS_p$ of $\cS$
such that its total cost as computed in Equation~(\ref{eq:cost-bsgf}) is {\em minimal}. 
\camera{The \textsc{Scan-Shared Optimal Grouping} problem,
which is known to be NP-hard,
is reducible to this problem~\cite{nykiel2010}:}
\begin{theorem}
\label{theorem:bsgfopt-np}
The decision variant of\/ \bsgfopt is NP-complete.
\end{theorem}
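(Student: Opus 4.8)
The plan is to prove membership in NP and NP-hardness separately; for the latter I would spell out the polynomial reduction from \textsc{Scan-Shared Optimal Grouping} (\textsc{SSOG}) referred to above, where \textsc{SSOG} is the known NP-hard~\cite{nykiel2010} problem of grouping a family of MapReduce jobs that read a shared input so as to minimise aggregate cost, and where forming a group amortises the shared scan and per-job start-up overhead at the cost of a superadditive increase in sort/shuffle cost.

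For membership, note that the decision variant takes a BSGF query $Q$ — equivalently the induced set $\cS$ of semi-join equations together with the size parameters $|\alpha|$, $|\kappa_i|$, $|X_i|$ that the cost model consumes (or a database, from which these are computable in polynomial time) — and a bound $k$, and asks whether some partition of $\cS$ has total cost at most $k$ under Equation~(\ref{eq:cost-bsgf}). A partition of the at most $n$ equations is a polynomial-size certificate, and evaluating Equation~(\ref{eq:cost-bsgf}) on it involves only sums, products, logarithms and ceilings of the given quantities and the fixed constants of Table~\ref{tab:costconstants}; hence it is poly-time checkable and the problem lies in NP.

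For NP-hardness the key observation is that, for a fixed guard, the cost of a basic MR program for $Q$ has exactly the shape of an \textsc{SSOG} objective. By Equation~(\ref{eq:cost-bsgf}) this cost equals $\cost(\eval(R,\varphi_C))$ — a term independent of the chosen partition — plus $\sum_{i=1}^p \cost(\cS_i)$, and by Equation~(\ref{eq:sce2-group}) each block $\cS_i$ contributes one copy of the start-up cost $\cost_{\mystart}$, one scan of the guard $R$, the per-conditional-atom map costs (each conditional atom lying in exactly one block, hence counted once in total), and a reduce-plus-merge term that is superadditive in the block because of the $\log_D\lceil\cdot\rceil$ factor. Given an \textsc{SSOG} instance I would therefore take $Q$ with a unary guard $R(x)$ playing the role of the shared input, conditional atoms $\kappa_i = S_i(x)$ over fresh unary relations, and $C = \kappa_1 \wedge \cdots \wedge \kappa_n$; the sizes of $R$, the $S_i$, and the overlaps $|R\cap S_i| = |X_i|$ can then be chosen freely to realise the numeric parameters of the \textsc{SSOG} instance. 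A partition of $\cS$ now corresponds to a grouping of the jobs, the two objectives agree up to the partition-independent additive constant $\cost(\eval(R,\varphi_C))$, and the bound $k$ transfers accordingly, which finishes the reduction.

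I expect the main obstacle to be making the two cost formulas line up exactly: the model used here is a deliberate refinement of the one in~\cite{wang2013,nykiel2010} (it treats inputs with distinct map in/out ratios separately), and the superadditive $\log_D\lceil\cdot\rceil$ terms are delicate. I would deal with this by invoking the hardness of \textsc{SSOG} already on a restricted family of instances on which the refinement is immaterial — for example instances whose grouped map outputs fit within one merge pass, so that the ceiling-of-log factor is a constant — and, should a black-box match of the formulas be awkward, re-deriving hardness for that restricted family straight from the packing/partitioning argument underlying the \textsc{SSOG} lower bound. A minor point is to keep all numbers produced by the reduction polynomially bounded (or to appeal to strong NP-hardness of \textsc{SSOG}), so that a witnessing database, if one insists on producing one rather than just statistics, has polynomial size.
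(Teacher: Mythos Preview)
Your proposal is correct and follows exactly the approach the paper takes: membership in NP via a partition certificate, and hardness by reducing \textsc{Scan-Shared Optimal Grouping} to \bsgfopt. In fact the paper's own treatment is a single sentence asserting this reducibility without spelling out the reduction, so your write-up is already more detailed than what the paper provides.
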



While for small queries the optimal solution can be found using a brute-force
search, for larger queries we adopt the fast greedy heuristic introduced by Wang et al\cite{wang2013}.
For two disjoint subsets $\cS_i,\cS_j \subseteq \cS$,
define:
\[
\gain(\cS_i,\cS_j) = \cost(\cS_i) + \cost(\cS_j) - \cost(\cS_i\cup\cS_j).
\]
That is, $\gain(\cS_i,\cS_j)$ denotes the cost gained
by evaluating $\cS_i\cup\cS_j$ in one MR job
rather than evaluating each of them separately.
For a partition $\cS_1\cup\cdots\cup\cS_p$,
our heuristic algorithm greedily finds a pair $i,j \in [p]\times[p]$
such that $i\neq j$ and $\gain(\cS_i,\cS_j) > 0$ is the greatest.
If there is such a pair $i,j$,
we merge $\cS_i$ and $\cS_j$ into one set.
We iterate such heuristic 
starting with the trivial partition
$\cS_1\cup\cdots\cup\cS_n$, where each 
$\cS_i = \{X_i \coloneqq \semi{\vw}{R(\vt)}{\kappa_i}\}$.
The algorithm stops when there is no pair $i,j$ for which
$\gain(\cS_i,\cS_j) > 0$.
We refer to this algorithm as \gbsgf.
For a BSGF query $Q$, 
we denote by $\opt(Q)$ 
the optimal (least cost) basic MR program for $Q$,
and by $\gopt(Q)$  we denote the program computed by \gbsgf.

%
%

\subsection{Evaluating Multiple BSGF Queries}
\label{sec:mul-bsgf}

The approach presented in the previous section can be readily adapted to evaluate multiple BSGF queries. Indeed, consider a set of $n$ BSGF queries, each of the form
\[ \Outrel_i \coloneqq \SELECT \ \vw_i\ \FROM \
R_i(\vt_i)\ \WHERE \ C_i \] 
where none of the $C_i$ can refer to any of the $\Outrel_j$.
A corresponding MR program is then of the form
\begin{center}
\includegraphics{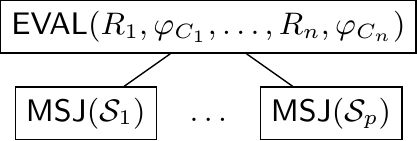}
\end{center}
The child nodes constitute a partition of all the necessary semi-joins.
Again, $\varphi_{C_i}$ is the Boolean formula obtained from $C_i$.
We assume that the set of variables used in the Boolean formulas are disjoint.
For a set of BSGF queries $F$, we refer to any MR program of the above form
as a \emph{basic MR program for $F$},
whose cost can be computed in a similar manner as \camera{above}.
The optimal basic program for $F$ and the program computed by the greedy algorithm of Section~\ref{sec:basic} are denoted by $\opt(F)$ and $\gopt(F)$, respectively,
and their costs are denoted by $\cost(\opt(F))$ and $\cost(\gopt(F))$.


\subsection{Evaluating SGF Queries}
\label{sec:sgf-queries}

Next, we turn to the evaluation of SGF queries. Recall that 
an SGF query $Q$ is a sequence of basic queries of the form
\camera{$
\Outrel_1  \coloneqq  \bodyquery_1; \ldots; \Outrel_n  \coloneqq \bodyquery_n;
$}
where each $\bodyquery_i$ can refer to the relations $\Outrel_j$
with $j<i$. We denote the BSGF $Z_i\coloneqq \xi_i$ by $Q_i$.
The most naive way to compute $Q$ is to evaluate the \camera{BSGF queries in} $Q$ sequentially,
where each $\bodyquery_i$ is evaluated  using the approach detailed in the 
previous section.
This leads to a $2n$-round MR program. 
We would like to have a better strategy that aims at decreasing the total time
by combining the evaluation of different independent subqueries. 

To this end, let $\cG_Q$ be the dependency graph induced by $Q$. That is,
$\cG_Q$ consists \camera{of a set $F$} of $n$ nodes (one for each BSGF query)
and there is an edge from $Q_i$ to $Q_j$ if relation $\Outrel_i$ is mentioned 
in $\bodyquery_j$.
A \emph{multiway topological sort} of the dependency graph
$\cG_Q$ is a sequence $(F_1,\dots,F_k)$ such that
\begin{compactenum}
\item 
$\{F_1,\dots,F_k\}$ is a partition of \camera{$F$};
\item 
if there is an edge from node $u$ to node $v$ in $\cG_Q$, 
then $u \in F_i$ and $v \in F_j$ such that $i < j$.
\end{compactenum}
Notice that any multiway topological sort $(F_1,\ldots,F_k)$ of $\cG_Q$ provides a valid \camera{ordering} to evaluate $Q$,
i.e., all the queries in $F_i$ are evaluated before $F_j$ whenever $i < j$.

\begin{example}
\label{ex:sgf}
Let us illustrate the latter by means of an example.
Consider the following SGF query $Q$:
 \begin{eqnarray}
   Q_1: \quad \Outrel_1 & := & \SELECT \ x,y\ \FROM \ R_1(x,y)\ \WHERE\ S(x) \nonumber\\
   Q_2: \quad \Outrel_2 & := & \SELECT \ x,y\ \FROM \ \Outrel_1(x,y)\ \WHERE\ T(x)\nonumber\\
   Q_3: \quad \Outrel_3 & := & \SELECT \ x,y\ \FROM \ \Outrel_2(x,y)\ \WHERE\ U(x)\nonumber\\
   Q_4:  \quad \Outrel_4 & := & \SELECT \ x,y\ \FROM \ R_2(x,y)\ \WHERE\ T(x)\nonumber\\
   Q_5: \quad \Outrel_5 & := & \SELECT \ x,y\ \FROM \ \Outrel_3(x,y)\ \WHERE\ \Outrel_4(x,x)\nonumber
 \end{eqnarray}
The dependency graph $\cG_Q$ is as follows:
\begin{center}
\includegraphics{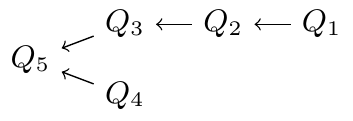}
\end{center}
There are four possible multiway topological sorts of $\cG_Q$:
\begin{compactenum}
\item $(\{Q_1, Q_4\},\ \{Q_2\}, \ \{Q_3\}, \ \{Q_5\})$.
\item $(\{Q_1\},\ \{Q_2,Q_4\}, \ \{Q_3\}, \ \{Q_5\})$.
\item $(\{Q_1\},\ \{Q_2\},\ \{Q_3,Q_4\}, \ \{Q_5\})$.
\item $(\{Q_1\},\ \{Q_2\},\ \{Q_3\}, \ \{Q_4\}, \{Q_5\})$.\hfill $\ourend$
\end{compactenum}


\end{example}



Let $\cF =(F_1,\ldots,F_k)$ be a topological sort of $\cG_Q$.
Since the optimal program $\opt(F_i)$, defined in Subsection~\ref{sec:mul-bsgf},
is intractable (due to Theorem~\ref{theorem:bsgfopt-np}),
we will use the greedy approach to evaluate $F_i$, i.e.,
$\gopt(F_i)$ as defined in Section~\ref{sec:mul-bsgf}.
The cost of evaluating $Q$ according to $\cF$ is
\begin{equation}
\label{eq:cost-sgf-gopt}
\cost(\cF) \ = \
\sum_{i=1}^k \ \cost(\gopt(F_i))
\end{equation}

We define the optimization problem \sgfopt that takes as input an SGF query $Q$
and constructs a multiway topological sort $\cF$ of $\cG_Q$ 
with minimal $\cost(\cF)$.
\camera{A reduction from {\sc Subset Sum}~\cite{GareyJohnson} yields following result (cf.\ \citefullpaperA):}

\begin{theorem}
\label{theorem2}
The decision variant of \sgfopt is NP-complete.
\end{theorem}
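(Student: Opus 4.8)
The plan is to show NP-hardness of \sgfopt\ by a reduction from {\sc Subset Sum}, as hinted in the excerpt; membership in NP is immediate since a candidate multiway topological sort $\cF$ can be checked in polynomial time and its cost $\cost(\cF)$ evaluated via Equation~(\ref{eq:cost-sgf-gopt}), provided the cost-model arithmetic is carried out with bounded-precision rationals. For the hardness direction, I would start from an instance of {\sc Subset Sum}: positive integers $a_1,\dots,a_m$ and a target $t$, asking whether some subset $J \subseteq [m]$ satisfies $\sum_{j\in J} a_j = t$. The idea is to engineer an SGF query $Q$ whose dependency graph $\cG_Q$ consists of $m$ mutually independent ``gadget'' subqueries $Q_1,\dots,Q_m$ — each a BSGF query over its own private guard relation and conditional atoms, sized so that the cost contribution of a gadget reflects the weight $a_j$ — together with a small fixed ``sink'' subquery $Q_0$ that depends on all of the $Q_j$ (so every multiway topological sort must place $Q_0$ strictly after every layer containing a gadget).

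The key steps, in order: (1) design the gadget so that $Q_j$ alone, when placed in its own layer, contributes roughly a fixed startup/overhead term plus a term proportional to $a_j$, exploiting the per-job overhead $\cost_{\mystart}$ and the map/merge costs from Section~\ref{subsec:cost-model}; (2) arrange the input relation sizes so that placing two gadgets $Q_i,Q_j$ in the \emph{same} layer is advantageous — i.e.\ $\gopt$ of a layer merges their \validate\ jobs with positive gain — precisely when their combined ``weight'' does not exceed some threshold, while exceeding the threshold forces the greedy merge to split or incur a penalty; (3) choose the guard atom of the sink $Q_0$ and a capacity parameter so that the minimal-cost multiway topological sort partitions the gadgets into exactly two layers $(F_1,F_2)$ before $Q_0$ (say, $F_3=\{Q_0\}$), with the split achieving minimum total cost exactly when one of the two layers carries gadgets of total weight equal to $t$; (4) verify that the resulting threshold value of $\cost(\cF)$ is attained if and only if the {\sc Subset Sum} instance is a yes-instance, and that all numbers produced are polynomial in the input (using the standard pseudo-polynomial-to-polynomial encoding care for {\sc Subset Sum}, or noting {\sc Subset Sum} is NP-complete with numbers in binary).

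The main obstacle I anticipate is controlling the cost model precisely enough: the merge costs $\mapmerge$ and $\redmerge$ involve ceilings and logarithms (base $D$), so the gadget sizes must be chosen in a regime where these terms behave linearly (e.g.\ all intermediate data fits within one buffer pass, making the $\log_D\lceil\cdot\rceil$ factor equal to $1$), so that $\cost(\cS_i)$ reduces to an affine function of the relation sizes. A secondary subtlety is that \sgfopt\ is defined in terms of $\gopt(F_i)$ (the greedy heuristic), not $\opt(F_i)$, so the reduction must reason about what \gbsgf\ actually produces on each layer rather than the true optimum — this means the gadgets should be built so that within any single layer the greedy merging behaviour is forced and predictable (for instance, by making all pairwise gains within a ``good'' group strictly positive and decreasing, and all gains against ``bad'' partners non-positive). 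Once the affine-cost regime and the forced greedy behaviour are secured, the correspondence between optimal layer-partitions and balanced subset splits is a routine bookkeeping argument; the details are deferred to \citefullpaperA.
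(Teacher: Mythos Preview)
Your proposal has a genuine gap: the mechanism by which the optimal multiway topological sort would correspond to a subset of weight exactly $t$ is never supplied, and the threshold-style structure you sketch in steps~(2)--(3) does not naturally encode an \emph{exact-sum} condition. A ``merging is advantageous below a threshold, penalised above'' cost profile yields a bin-packing flavour, not {\sc Subset Sum}; you assert in step~(3) that the minimum is attained precisely when one layer has weight $t$, but nothing in your construction forces this, and it is unclear how the greedy $\gopt$ on a layer would produce such a sharply tuned cost. Moreover, your sink $Q_0$, forced into its own final layer, contributes only a fixed additive term and so plays no role in discriminating between subsets.

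The paper's reduction uses two ideas you are missing. First, it introduces \emph{no} dependencies at all: the dependency graph has no edges, so every multiway topological sort is just a set partition of the BSGF queries, and the sink device is unnecessary. Second, and crucially, it adds a special ``absorbing'' query $f^{\circ}$ whose conditional atoms mention \emph{every} relation appearing in the gadgets; this guarantees that $\gopt$ always merges every gadget in $f^{\circ}$'s layer with $f^{\circ}$, and by choosing the guard relations to be empty and setting all cost constants to zero except $h_r=1$, the cost of that layer becomes a fixed constant $\gamma=\sum_i a_i$ regardless of which gadgets accompany $f^{\circ}$. Every other layer then costs exactly the sum of its gadgets' weights $a_i$. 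Hence the total cost is $\gamma$ plus the sum of the $a_i$ \emph{not} placed with $f^{\circ}$, and hitting the target $k+\gamma$ is equivalent to {\sc Subset Sum}. Zeroing out the cost constants also dissolves your worry about ceilings and logarithms: in that regime the cost model is exactly linear, with no need to argue about buffer passes.
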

In the following, we present a novel heuristic 
for computing a multiway topological sort of an SGF 
that tries to maximize the overlap between queries.
To this end, we define the \emph{overlap} between a BSGF query $Q$ 
and a set of BSGF queries $F$, denoted by $\overlap(Q,F)$, 
to be the number of relations occurring in $Q$ that also occur in $F$. 
For instance, in Example~\ref{ex:sgf}, the overlap between $Q_2$ and $\{Q_1,Q_3,Q_4,Q_5\}$ 
is 1 as they share only relation $T$. 

Consider the following algorithm \gsgf\
that computes a multiway topological sort $\cF$ of \camera{an SGF query} $Q$.
\camera{
Initially, all the vertices in the dependency graph $\cG_Q$
are colored blue and $\cX = ()$.
The algorithm performs the following iteration
with the invariant that $\cX$
is a multiway topological sort of the red vertices in $\cG$:
}
\begin{compactenum}
\item
Suppose $\cX = (F_1,\ldots,F_m)$ and
\camera{blue vertices remain.}
\item
Let $D$ be the set of those blue vertices in $\cG_Q$
for which none of the incoming edges are from other blue vertices.
(Due to the acyclicity of $\cG_Q$, 
the set $D$ is non-empty if $\cG_Q$ still has blue vertices.)
\item
Find a pair $(u,F_i)$ such that $u \in D$,
$(F_1,\ldots,F_i\cup \{u\},\ldots,F_m)$ is a topological sort
of the vertices \camera{$\{u\}\cup\bigcup_i F_i$}, and
$\overlap(u,F_i)$ is non-zero.
\item
If such a pair $(u,F_i)$ exists,
choose one with maximal $\overlap(u,F_i)$,
and set $\cX = (F_1,\ldots,F_i\cup \{u\},\ldots,F_m)$.
Otherwise, set $\cX = (F_1,\ldots,F_m,\{u\})$.
\item
Color the vertex $u$ red.
\end{compactenum}
The iteration stops when every vertex in $\cG_Q$ is red,
and hence, $\cX$ is a multiway topological sort of $\cG_Q$.
Clearly, the number of iterations is $n$,
where $n$ is the number of vertices in $\cG_Q$.
Each iteration takes $O(n^2)$.
Therefore, the heuristic algorithm outlined above runs in $O(n^3)$ time.

\camera{Note that a naive dynamic evaluation strategy may consist of re-running \gsgf
after each BSGF evaluation in order to obtain an updated \mr query plan.}

\subsection{Evaluating Multiple SGF Queries}
\label{sec:mul-sgf}

Evaluating a collection of SGF queries can be done in the same way as evaluating
one SGF query. Indeed, we can simply consider the union of all BSGF subqueries.
Note that
\camera{this strategy can exploit} overlap between \camera{different sub\-queries}, potentially bringing down the total and/or net time.






\section{Experimental validation}
\label{sec:experiments}

In this section, we experimentally validate the effectiveness of our
algorithms. First, we discuss
our experimental setup in Section~\ref{sec:gumbo}.
In Section~\ref{sec:exp:bsgf}, we discuss the evaluation of BSGF queries.
In particular, we compare with Pig and Hive and address the effectiveness of 
the cost model. 
The experiments concerning nested SGF queries are presented in Section~\ref{sec:exp:sgf}.
Finally, Section~\ref{sec:exp:system} 
discusses the overal performance of our own system called \Gumbo.


\subsection{Experimental Setup}
\label{sec:gumbo}

The algorithms \gbsgf and \gsgf are implemented in a system called
\Gumbo~\cite{gumbo,DaenenNT15}. \Gumbo runs on top of Hadoop,
and adopts several important optimizations:
\begin{compactenum}[(1)]
\item Message packing, as also used in \cite{wang2013}, reduces
  network communication by packing all the request and assert messages
  associated with the same key into one list. 

\item Emitting a reference  to each guard tuple (i.e., a tuple id) rather than the tuple itself {when evaluating (B)SGF queries}
significantly reduces the number of bytes that are shuffled.  To compensate for this reduction, the guard relation needs to be re-read in the \eval job
but the latter is insignificant w.r.t. the gained improvement.

\item Setting the number of reducers in function of the intermediate
data size. An estimate of the intermediate size is obtained through
simulation of the map function on a sample of the input relations.
The latter estimates are also used as approximate values for $N_{\inp}$, $N_{\inter}$, and $N_{\out}$. For the experiments below, 256MB of data was allocated to each reducer.

\item 
{When the conditional atoms of a BSGF query 
all have the same join-key, the query can be evaluated in one job
by combining \validate and \eval.
A similar reduction to one job can be obtained when the 
the Boolean condition is restricted to 
only disjunction and negation. 
The same optimization also works for multiple BSGF queries.} We refer to these
programs as \ONEROUND below.
\end{compactenum}

All experiments are conducted on the HPC infrastructure of the
Flemish Supercomputer Center (VSC). 
Each experiment
was run on a cluster consisting of 10 compute nodes.  Each node
features two 10-core ``Ivy Bridge'' Xeon E5-2680v2 CPUs (2.8 GHz, 25
MB level 3 cache) with 64 GB of RAM and a single 250GB harddisk.  The
nodes are linked to a IB-QDR Infiniband network.  We used Hadoop~2.6.2, \camera{Pig~0.15.0 and Hive~1.2.1};
the specific Hadoop settings and cost model constants can be found in \citefullpaperB.
All experiments are run three times; average results are reported.

Queries typically contain a multitude of relations and the input sizes of our experiments go up to 100GB depending on the query {and the evaluation strategy}.
The data that is used for the guard relations consists of 
100M tuples that add up to 4GB per relation. For the conditional relations we use the same number of tuples that add up to 1GB per relation; {50\% of the conditional tuples \camera{match} those of the guard relation.}


We use the following performance metrics: 
\begin{compactenum}
\item \emph{total time}: the aggregate sum of time spent by all mappers and reducers;
\item \emph{net time}: elapsed time between query submission to obtaining the final result;
\item \emph{input cost}: the number of bytes read from hdfs over the entire MR plan; 
\item \emph{communication cost}: the number of bytes that are transferred from mappers to reducers.
\end{compactenum}

\ignore{
In this section, we outline the most important optimizations
that are used to speed up query evaluation in \Gumbo.

{
\subsubsection{Sampling and Estimation} 
To provide accurate measures for the cost model (see Section~\ref{sec:costmodel}),
for each job, \Gumbo performs a sampling of the input relations,
followed by a map simulation.  
}

{
\subsubsection{TupleIDs} 
A first optimization that significantly
reduces the number of bytes that are shuffled involves emitting
a reference to each guard tuple, instead of the tuple itself.
The reference is represented using two variable-sized longs,
referring to the originating file and file offset respectively.
In our experiments, which mainly involve numeric data,
we saw shuffle size reductions ranging from xx\% to yy\%
in the \validate jobs. The reduction is expected to be 
even higher when more complex data types are used. 
To compensate for this reduction, the guard relation
needs to be re-read in the \eval job. This is insignificant
w.r.t. the improvement gained.
}

{
\subsubsection{Reducer Shaping}
Systems such as Hive and Pig determine the number of reducers
based on the input size of a query. This can cause a significant increase
in reduce task net time when multiple \validate jobs are grouped,
as the replication rate may cause a blow-up of intermediate data,
possibly leading to additional merge-level at the reduce side.
To cope with this issue, we first estimate the map output size, and
use this value to determine the number of reducers. This way, 
the reduce task's total and net times becomes more predictable,
as we aim to eliminate the merge-stage.
\jonny{the same technique can also be applied to mappers!}
}

{
\subsubsection{Confirm Only}
The output of a \validate job can be confined
by only emitting \CONFIRM messages. The \eval phase
then treats absent \CONFIRM messages as an implicit
\DENY. When there is more information on the selectivity
of the semi-join queries involved, an informed decision can
be made on whether to output \CONFIRM or \DENY messages,
reducing the amount of output records by at least 50\%.
}

{
\subsubsection{1-Job Multi-BSGF Evaluation}
When the guarded (?) atoms of a BSGF query 
all have the same join-key, or only disjunction and negation
are used in the Boolean combination, 
the query can be evaluated in one job.
The same holds for multiple BSGF queries,
which can then be evaluated in one job.
}

\subsubsection{Same join key}
\begin{remark}
\jonny{Maybe we can bundle this in a separate section with other optimizations? see table in v3 paper.}
We highlight the special case of a basic SGF query of the form $(\dag)$
where all the join keys in the conditional atoms are the same, say $\vv$.
Then $(\dag)$ can be evaluated in a single MR round by combining 
\validate and \eval as follows. 
The mapper emits
$\kv{\pi_{\alpha;\vv}(f)}{[\REQ\, 
;\OUT\, \pi_{\alpha;\vw}(f)]}$
messages for all $\alpha$-conforming facts $f$, and
$\kv{\pi_{X_i(\vv);\vv}(g)}{[\ASSERT\, X_i]}$ messages for all
$X_i(\vv)$-conforming facts $g$, $i \in [1,n]$. On input $\kv{\vb}{V}$
the reducer uses the $[\ASSERT\, X_i]$ messages in $V$ to verify that
$\varphi$ is satisfied, and, if so, outputs $\va$ for every $[\REQ\,
; \OUT\, \va]$ in $V$.

\hfill $\Box$
\end{remark}

\subsubsection{Packing}
\label{sec:packing}
Observe that the mapper in \validate
may emit multiple messages per fact $f$. 
A useful optimization to reduce network communication 
is to pack all the request messages associated
with the same key into one list.
For example, a group of key-value pairs with the same key
$$
\kv {\va}{[\REQ \ \xi_1; \HYP \ \vb_1]},
\ldots,
\kv {\va}{[\REQ \ \xi_m; \HYP \ \vb_m]}
$$
in the list {\sf buff} in Algorithm~\ref{alg:validate}
can be packed into a single key-value pair:\jonny{refer to \cite{wang2013}?}
$$
\kv {\va}{[\RHLIST \ (\xi_1, \vb_1),
\ldots, (\xi_{m},\vb_m)]}. 
$$
{Here, every $\xi_i$ is of the form $(\kappa,j)$.}
This way, the number of records
that need to be sorted and transferred 
during the shuffle phase is reduced.
A similar strategy can be applied to the \final{assert} messages.
That is, a group of key-value pairs:
$$
\kv {\va}{[\ASSERT \ \kappa_{i_1}]},
\ldots,
\kv {\va}{[\ASSERT \ \kappa_{i_m}]}
$$
can be packed into a single key-value pairs:
$$
\kv {\va}{[\ASLIST \ \kappa_{i_1},\ldots,\kappa_{i_m}]}.
$$

Packing can be applied on the map output of one fact directly,
or under the form of a \emph{combiner} function which is executed on the map output after it completes. Clearly, packing is effective only if a fact is associated to multiple atoms that are projected onto the same attribute. 

One can also apply message packing more aggressively on the entire output of a map task using a combiner to pack messages that originate from different facts but share a common key. However, we learned through our experiments that the additional overhead of using a combiner does not positively influence the total or net time. \jonny{that depends on data characteristics and query}
In contrast, packing the messages emitted per fact per map task can be efficiently implemented and can significantly decrease the size of the communicated data as we will see in the experimental results (see Section~\ref{sec:experiments}).
}


\subsection{BSGF Queries}
\label{sec:exp:seq}
\label{sec:exp:bsgf}

\begin{table}
\centering
\scriptsize
\begin{tabular}{|r|p{4cm}|p{2cm}|}
\hline
QID & Query & Type of query\\\hline
\hline
A1	&$R(x,y,z,w) \semijoin $\newline\phantom{space}$S(x) \land T(y) \land U(z) \land V(w)$ 	& guard sharing\\\hline
A2	&$R(x,y,z,w) \semijoin  $\newline\phantom{space}$S(x) \land S(y) \land S(z) \land S(w)$ 	& guard \& conditional name sharing\\\hline
A3	&$R(x,y,z,w) \semijoin  $\newline\phantom{space}$S(x) \land T(x) \land U(x) \land V(x)$	& guard \& conditional key sharing\\\hline
A4	&$R(x,y,z,w) \semijoin  $\newline\phantom{space}$S(x) \land T(y) \land U(z) \land V(w)$ \newline $G(x,y,z,w) \semijoin  $\newline\phantom{space}$W(x) \land X(y) \land Y(z) \land Z(w)$	& no sharing \\\hline
A5	&$R(x,y,z,w) \semijoin  $\newline\phantom{space}$S(x) \land T(y) \land U(z) \land V(w)$ \newline $G(x,y,z,w) \semijoin  $\newline\phantom{space}$S(x) \land T(y) \land U(z) \land V(w)$	&  conditional name sharing \\\hline\hline
B1	&$R(x,y,z,w) \semijoin   $\newline\phantom{spa}$S(x) \land T(x) \land U(x) \land V(x) \land {}$\newline\phantom{spa}$S(y) \land T(y) \land U(y) \land V(y) \land {}$\newline\phantom{spa}$S(z) \land T(z) \land U(z) \land V(z) \land {}$\newline\phantom{spa}$S(w) \land T(w) \land U(w) \land V(w)$	& large conjunctive query \\\hline
B2	&$R(x,y,z,w) \semijoin   $\newline$( S(x) \land \lnot T(x) \land \lnot U(x) \land \lnot V(x)) \lor (\lnot S(x) \land T(x) \land \lnot U(x) \land \lnot V(x)) \lor ( S(x) \land \lnot T(x) \land U(x) \land \lnot V(x)) \lor (\lnot S(x) \land \lnot T(x) \land \lnot U(x) \land V(x))$	& uniqueness query\\\hline
\end{tabular}
\caption{Queries used in the BSGF-experiment}
\label{tab:queriesbsgf}
\end{table}

\begin{figure*}
\centering
\begin{subfigure}[b]{0.47\textwidth}
\begin{center}
\includegraphics[width=\linewidth]{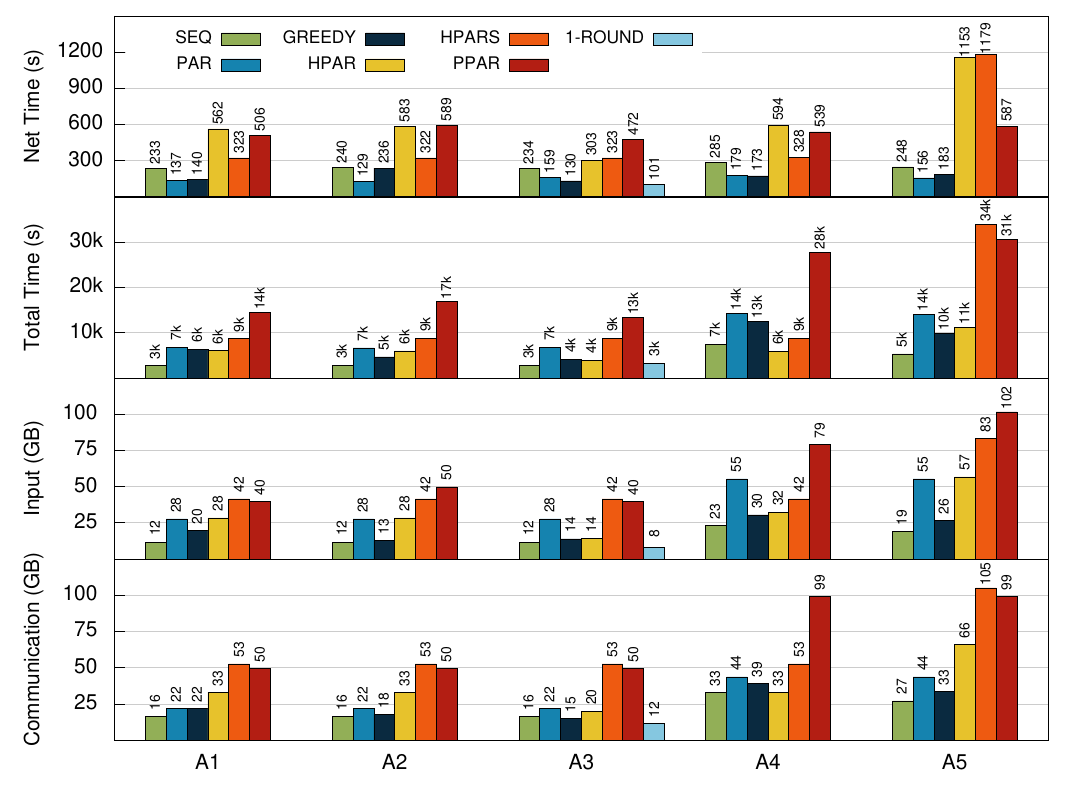}
\end{center}
 \caption{Absolute values.}
 \label{fig:bgsf:absolute}
\end{subfigure}
\qquad
\begin{subfigure}[b]{0.47\textwidth}
\begin{center}
\includegraphics[width=\linewidth]{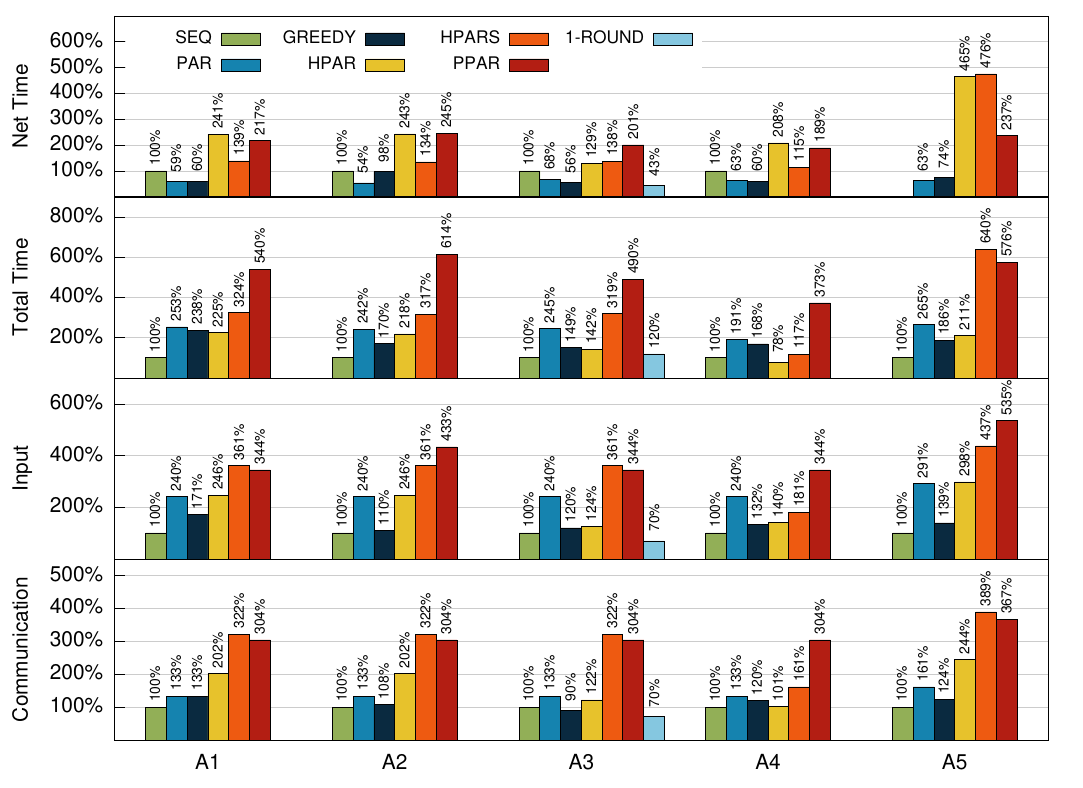}
\end{center}
 \caption{Values relative to \SEQ.}
 \label{fig:bgsf:relative}
\end{subfigure}
\caption{Results for evaluating the BSGF queries using different strategies.}
\label{fig:resultsbgsf}
\end{figure*}

Table~\ref{tab:queriesbsgf} lists the type of BSGF queries used in this section.\footnote{
The results obtained here generalize to non-conjunctive BSGF queries.
Conjunctive BSGF queries were chosen here to simplify the comparison  with sequential query plans.}
Figures~\ref{fig:resultsbgsf}~\&~\ref{fig:bgsf:b1b2} show the results that are discussed~next.

\paragraph*{Sequential vs.\ Parallel} 
We first compare sequential and parallel evaluation of queries A1--A5
to highlight the major differences between sequential 
and parallel query plans and to illustrate the effect of grouping. 
In particular, we consider three evaluation strategies in \Gumbo: 
(\textit{i}) 
evaluating all semi-joins sequentially by applying a semi-join
to the output of the previous stage (\SEQ), where the number of rounds
depends on the number of semi-joins;
(\textit{ii}) using the 2-round strategy with algorithm \gbsgf (\MSJ);
and, (\textit{iii}) a more naive version of \MSJ where 
no grouping occurs, i.e., every semi-join is evaluated separately
in parallel (\PAR). 
\camera{As semi-join algorithms in MR have not received significant attention,
we choose to compare with the two extreme approaches: no parallelization (\SEQ)
and parallelization without grouping (\PAR).}
Relative improvements of \PAR and \MSJ w.r.t.\ \SEQ 
are shown in Figure~\ref{fig:bgsf:relative}.

We find that both \PAR and \MSJ result in lower net times. 
In particular, we see average improvements of 39\%
and 31\% over \SEQ, respectively. 
On the other hand, the total times for \PAR are much higher
than for \SEQ: 132\% higher on average. This is explained by
the increase in both input and communication bytes, whereas
the data size can be reduced after each step in the sequential evaluation.
For \MSJ, total times vary depending on the structure of the query. 
Total times are significantly reduced 
for queries where conditional
atoms share join keys and/or relation names. This effect is most
obvious for queries A1, A2 and A5 {where we oberve
reductions in net time of 30\%, 29\% and 30\%, respectively,
w.r.t.\ \PAR.}

For query A3, all conditional atoms have the same join key,
making 1-round (\ONEROUND, see Section~\ref{sec:gumbo}) 
evaluation possible. This further reduces
{the total and net time to only 49\% and 63\% of those of \PAR, respectively.}




\paragraph*{Hive \& Pig}
We now {examine} parallel query evaluation in Pig and Hive
and show that \Gumbo outperforms both systems for BSGF queries.
For this test, we implement the 2-round query plans of Section~\ref{sec:basic}
directly in Pig and Hive. For Hive, we consider two evaluation strategies:
one using Hive's left-outer-join operations (\HIVEPAR) and one 
using Hive's semi-join operations (\HIVEPARSEM).
For Pig, we consider one strategy
that is implemented using the COGROUP operation (\PIGPAR).
We also studied sequential evaluation of BSGF queries in both systems
but choose to omit the results here as both performed drastically worse 
than their \Gumbo equivalent (\SEQ) in terms of net and total time. 

%

First, we find that \HIVEPAR lacks parallelization.  This is caused by
Hive's restriction that certain join operations are executed
sequentially, even when parallel execution is enabled.  This leads to
net times that are 238\% higher on average, compared to \PAR. Note
that query A3 shows a better net time than the other queries. This is
caused by Hive allowing grouping on certain join queries, effectively
bringing the number of jobs (and rounds) down to 2.

Next, we find that \HIVEPARSEM performs better than \HIVEPAR
in terms of net time but is still 126\% higher on average than \PAR.
The lower net times w.r.t.\ \HIVEPAR are explained by Hive
allowing parallel execution of semi-join operations, without allowing
any form of grouping. This effectively makes \HIVEPAR 
the Hive equivalent of \PAR. The high net times are caused by
Hive's higher average map and reduce input sizes.
 
Finally, Pig shows an average net time increase of 254\%.
This is mainly caused by the lack of reduction in intermediate
data and in input bytes, together with \camera{input-based reducer allocation}
(1GB of \camera{map} input data
per reducer). For these queries, this leads to a low number of reducers,
causing the average reduce time, and hence overall net time, 
to go up.

As the reported net times for Hive and Pig are much higher than 
for sequential evaluation in \Gumbo (\SEQ), 
we conclude that Pig and Hive, with default settings, 
are unfit for parallel \camera{evaluation} of BSGF queries. For this reason
we restrict our attention to \camera{\Gumbo} in the following sections.

\paragraph*{Large Queries}
Next, we compare the evaluation of two larger
BSGF queries B1 and B2 from Table~\ref{tab:queriesbsgf}. 
The results are shown in Figure~\ref{fig:bgsf:b1b2}.
Query B1 is a conjunctive BSGF query featuring a high 
number of atoms. Its structure ensures a deep sequential
plan that results in a high net time for \SEQ.
We find that \PAR only takes 22\% of the net time,
which shows that parallel query plans can yield significant improvements.
\camera{Conversely}, \PAR takes up 261\% more total time than \SEQ, 
as the latter is more efficient in pruning the data at each step.
Here, \MSJ is able to successfully parallelize query execution 
without sacrificing total time. Indeed, \MSJ exhibits a net time 
comparable to that of \PAR and a total time comparable to that of \SEQ. 

Query B2 consists of a large boolean combination
and is called the \emph{uniqueness query}. This query returns
the tuples that can be connected to precisely one of the
conditional relations through a given attribute. The number
of distinct conditional atoms is limited, and the disjunction at the highest
level makes it possible to evaluate the four conjunctive subexpressions
in parallel \camera{using \SEQ.}
Still, we find that the net time of of \PAR improves that of \SEQ by 
66\%. As \PAR only needs to calculate the result of four semi-join
queries in its first round, we also find a reduction of 57\% in total time.
\MSJ further reduces both numbers.

Finally, for B2, a 1-round evaluation (\ONEROUND, see Section~\ref{sec:gumbo})
can be considered, as only 
one key is used for the conditional atoms. This evaluation strategy
\camera{brings down} both net and total \camera{time} of \SEQ by more
than 80\%. 

\paragraph*{Cost Model}
As explained in Section~\ref{subsec:cost-model}, the major difference
between our cost model and that of Wang~et~al.~\cite{wang2013}
(referred to as \ourcost and \wangscost, respectively, from here onward)
concerns identifying the individual map cost contributions of the input relations. 
For queries where the map
input/output ratio differs greatly among the input relations,
we notice a vast improvement for the \GCOST strategy. 
We illustrate this using the following query:
\begin{align*}
R(x,y,z,w) {}\semijoin{} &S_1(\vx_1,c) \land \ldots \land S_1(\vx_{12},c) \land{} \\
	&S_2(\vx_1,c) \land \ldots \land S_2(\vx_{12},c) \land{}\\
	&S_3(\vx_1,c) \land \ldots \land S_3(\vx_{12},c) \land{}\\
	&S_4(\vx_1,c) \land \ldots \land S_4(\vx_{12},c),
\end{align*}
where $\vx_1,\ldots,\vx_{12}$ are all distinct keys and 
$c$ is a constant that filters out all tuples from $S_1,\ldots,S_4$.
The results for evaluating this query using \GCOST with
\ourcost and \wangscost are unmistakable:
\ourcost provides a 43\% reduction in total time and
a 71\% reduction in net time. The explanation is that
\wangscost does not discriminate between different input
relations, it averages out the intermediate data and \camera{therefore} fails
to account for the high number of map-side merges
and the accompanying increase in both total and net time. 

For queries A1--A5 and B1--B2, where input relations
have a contribution to map output that is proportional to their input size, 
we find that both cost models behave similarly. 
\camera{When comparing two random jobs, the cost models
correctly identify the highest cost job in 72.28\% (\ourcost) and
69.37\% of the cases (\wangscost).}
Hence, we find that \ourcost provides a 
more robust \camera{cost estimation
as it can isolate} input relations that have a
non-proportional contribution to the map output, 
while it automatically resorts to \wangscost
in the case of \camera{an equal} contribution.

\paragraph*{Conclusion} 
We conclude that parallel evaluation ef\-fec\-tive\-ly
lowers net times, at the cost of higher total times.
\MSJ, {backed by an updated cost model}, 
successfully manages to bring down total
times of parallel evaluation, especially in the presence of 
commonalities among the atoms of BSGF queries.
For larger queries, total times similar to \SEQ
are obtained. Finally, \Gumbo outperforms
Pig and Hive in all aspects when it comes to
parallel evaluation of BSGF queries.

\begin{figure}[t]
\centering
\begin{center}
\includegraphics[width=\linewidth]{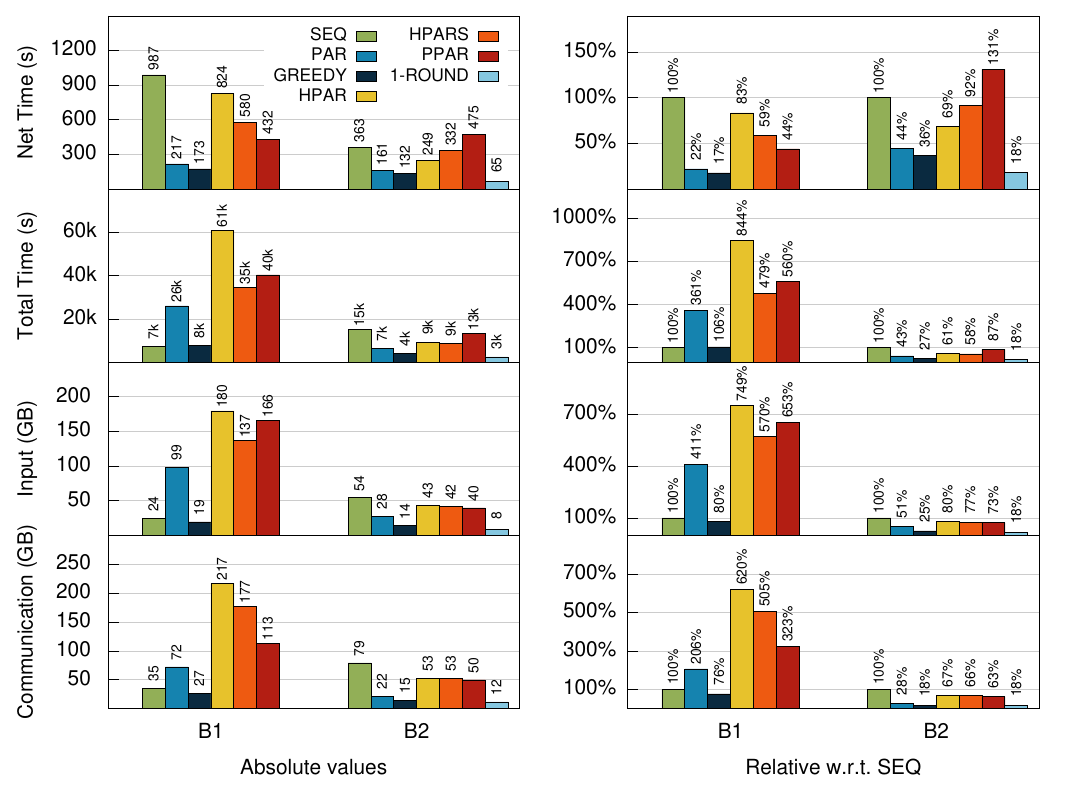}
\end{center}
 \caption{ Large BSGF queries.}
 \label{fig:bgsf:b1b2}
\end{figure}

\begin{figure}[t]
\begin{center}
\includegraphics[width=\linewidth]{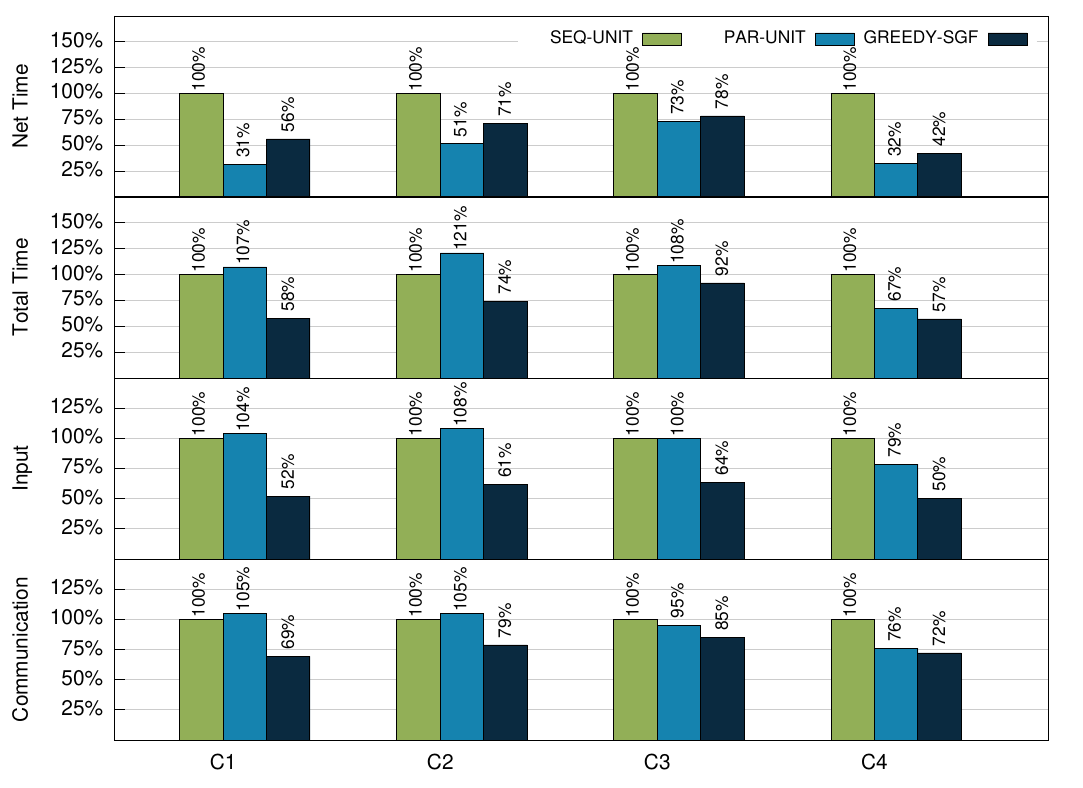}
\end{center}
\caption{SGF results, values relative to \SEQUNIT.}
\label{fig:resultssgf}
\end{figure}

\begin{figure}
  \centering
    \begin{subfigure}[b]{\linewidth}
    \centering
\scalebox{0.55}{
\includegraphics{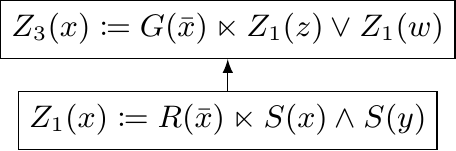}
\includegraphics{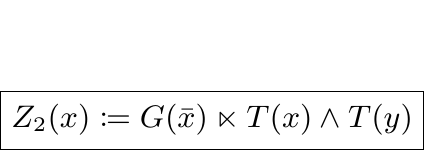}
\includegraphics{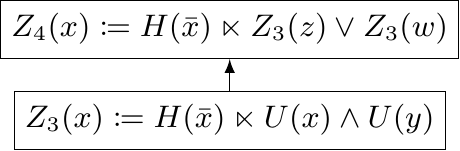}
}
  \caption{Query Set C1}
  \end{subfigure}

    \vspace{1em}
    \begin{subfigure}[b]{\linewidth}
    \centering
\scalebox{0.55}{
\includegraphics{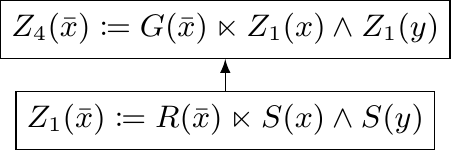}
\includegraphics{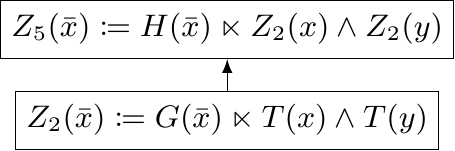}
\includegraphics{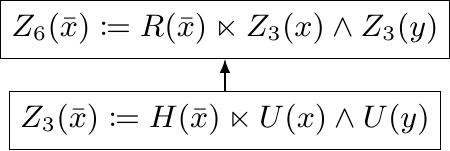}
}
  \caption{Query Set C2}
  \end{subfigure}
  
  \vspace{1em}
    \centering
    \begin{subfigure}[b]{\linewidth}
    \centering
\scalebox{0.6}{
\includegraphics{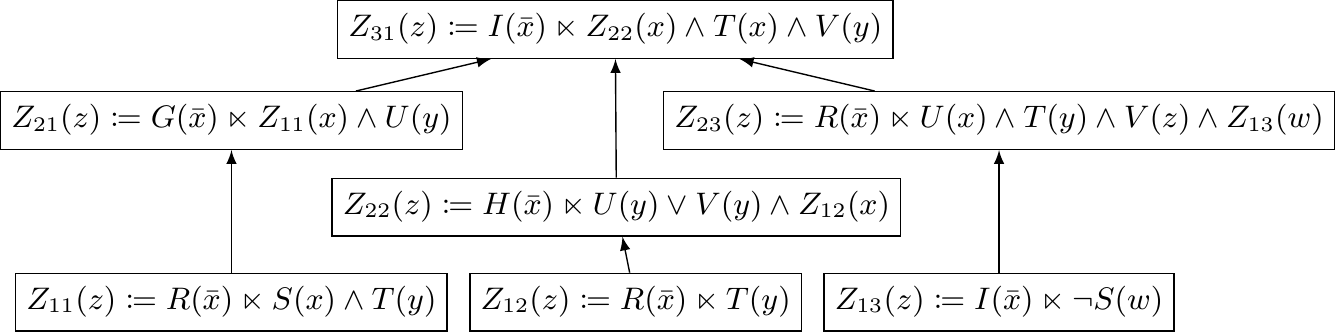}
}
  \caption{Query C3}
  \end{subfigure}
  
  \vspace{1em}
     \begin{subfigure}[b]{\linewidth}
    \centering
\scalebox{0.6}{
\includegraphics{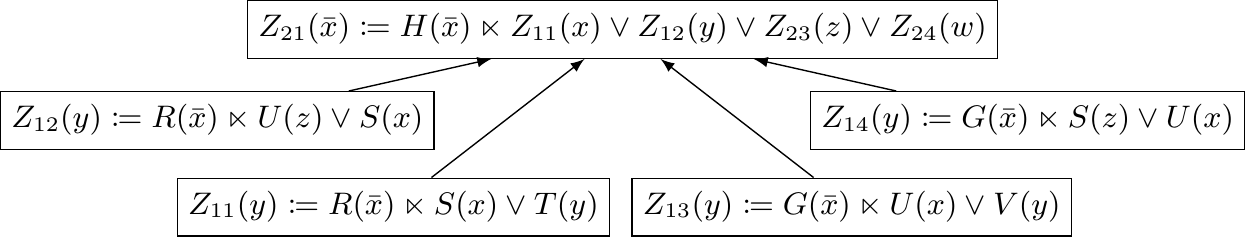}
}
  \caption{Query C4}
  \end{subfigure}
  \qquad

    \caption{The queries used in the SGF experiment. Each node represents one BSGF subquery ($\vx = x,y,z,w)$.}
  \label{fig:sgfqueries}
\end{figure}

\subsection{SGF Queries}
\label{sec:exp:sgf}
In this section, we show that the algorithm \gsgf 
succeeds in lowering total time while avoiding significant \camera{increase} 
in \camera{net time}. Figure~\ref{fig:sgfqueries} gives an overview of the type of queries that are used. 
Results are depicted in Figure~\ref{fig:resultssgf}. Note that these queries all exhibit different properties.
Queries C1 and C2 both contain a set of SGF queries where
a number of atoms overlap. Query C3 is a complex query
that contains a multitude of different atoms. Finally, Query C4 
consists of two levels and many overlapping atoms.

We consider the following evaluation strategies in \Gumbo:
(\textit{i}) sequentially, i.e., one at a time, evaluating all BSGF
queries in a bottom-up fashion (\SEQUNIT); (\textit{ii}) evaluating
all BSGF queries in a bottom-up fashion level by level where queries
on the same level are executed in parallel (\PARUNIT); and, (\textit{iii})
using the greedily computed topological sort {combined with \gbsgf} (\gsgf);
\camera{Note that in \SEQUNIT and \PARUNIT all semi-joins are evaluated in separate jobs.}
For all tests conducted here, we found that \gsgf yields
multiway topological sorts that are identical to the
optimal topological sort (computed trough brute-force methods);
\camera{hence, we omit the results for the optimal plans.}


\begin{figure*}
\captionsetup[subfigure]{aboveskip=-5pt,belowskip=0pt}
\centering
\begin{subfigure}[b]{0.32\textwidth}
\begin{center}
\includegraphics[width=\linewidth]{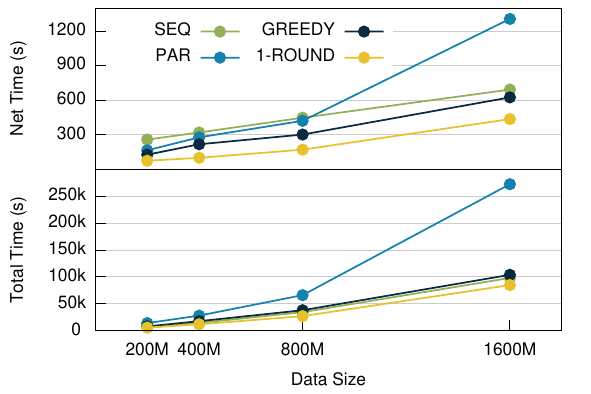}
\end{center}
 \caption{Varying data size (10 nodes).}
 \label{fig:sys:data}
\end{subfigure}
\begin{subfigure}[b]{0.32\textwidth}
\begin{center}
\includegraphics[width=\linewidth]{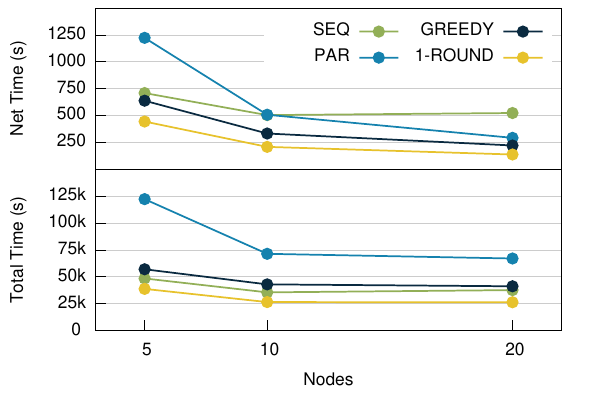}
\end{center}
 \caption{Varying cluster size (800M tuples).}
 \label{fig:sys:nodes}
\end{subfigure}
\begin{subfigure}[b]{0.32\textwidth}
\begin{center}
\includegraphics[width=\linewidth]{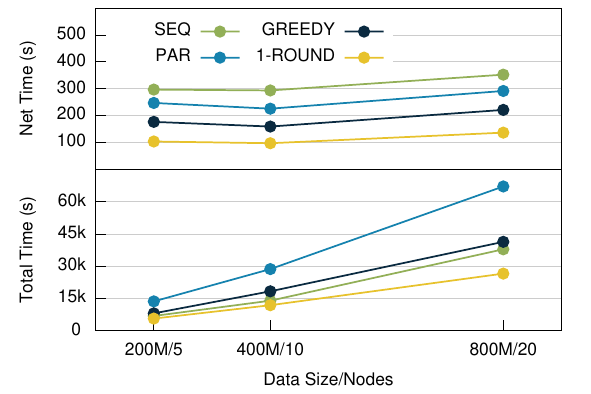}
\end{center}
 \caption{Varying data and cluster size.}
 \label{fig:sys:datanodes}
\end{subfigure}
\caption{Results for system characteristics tests for \Gumbo.}
\end{figure*}

Similar to our observations for BSGF queries,
we find that full sequential evaluation (\SEQUNIT)
results in the largest net times. Indeed, 
\PARUNIT exhibits 55\% lower net times on
average. We also observe  that \PARUNIT exhibits 
significantly larger total times than \SEQUNIT 
for queries C1 and C2, while this is not the case for C3 and C4.
The reason is that for C3 and C4, \camera{queries}
on the same level still share common characteristics, 
leading to a lower number of distinct semi-joins.

For \gsgf, we find that it exhibits net times
that are, on average, 42\% lower than \SEQUNIT,
while still being 29\% higher than \PARUNIT.
The main reason for this is the fact that \gsgf
aims to minimize total time, and may introduce
extra levels in the MR query plan to obtain this
goal. Indeed, we find that total times are down
27\% w.r.t.\ \SEQUNIT, and 29\% w.r.t.\ \PARUNIT.

Finally, we note that the absolute savings in net time
range from 115s to 737s for these queries,
far outweighing the overhead cost of calculating
the query plan itself, which typically takes around 10s {(sampling included)}.
Hence, we conclude that \gsgf provides \camera{an evaluation
strategy for SGF queries that manages to bring down the total time
(and hence, the resource cost) of parallel query plans,
while still exhibiting low net times when compared to 
sequential approaches.}

\subsection{System Characteristics} 
\label{sec:exp:system}
In this final experiment, we 
study the effect of \camera{growing} data size, cluster size, query size, and selectivity.
\camera{We choose queries similar to A3 to include the \ONEROUND strategy. 
Similar growth properties hold for the other query types.}

\paragraph*{Data \& Cluster Size}
Figures~\ref{fig:sys:data}-\ref{fig:sys:datanodes}
show the result of evaluating A3 using \SEQ, \PAR, \MSJ and \ONEROUND
under the presence of variable data and cluster size.
We summarize the most important observations:
\begin{compactenum}
\item in all scenarios, \ONEROUND performs best in terms
of net and total time;
\item due to its lack of grouping, \PAR needs a high number of mappers,
which at some point exceeds the capacity of the cluster. This causes a large
increase in net and total time, an effect that can be seen in Figure~\ref{fig:sys:data}.
\item with regard to net time, adding more nodes is very effective for the parallel strategies \PAR, \MSJ and \ONEROUND;  in contrast, adding more nodes does not improve \SEQ substantially after some point;
\item when scaling data and cluster size at the same time,
all strategies are able to maintain their net times in the presence
of an increasing total time.
\end{compactenum}

%
%

\begin{figure}
\begin{center}
\includegraphics[width=5.5cm]{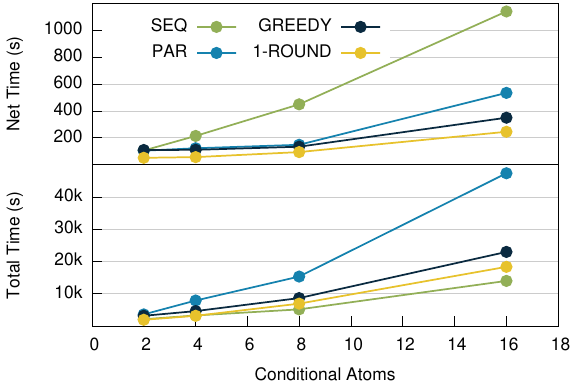}
\end{center}
 \caption{Varying the number of atoms.}
 \label{fig:sys:querysize}
\end{figure}

\paragraph*{Query Size} 
We consider a set of queries similar to A3, 
where the number of conditional atoms ranges from 2 to 16.
Results are depicted in Figure~\ref{fig:sys:querysize}.
With regard to net time, we find that
\SEQ shows an increase in net time 
that is strongly related to the query size, while
\PAR, \MSJ and \ONEROUND are less affected.
For total time,  we observe the converse for \PAR,
as this strategy cannot benefit from the packing 
optimization in the same way as \MSJ and \ONEROUND can.

\begin{table}[t]
\begin{center}
\small
\begin{tabular}{|r|rrr|rrr|}
\hline
&\multicolumn{3}{c|}{Net time}&\multicolumn{3}{c|}{Total time}\\
& A1&A2&A3& A1&A2&A3\\\hline
\SEQ 	&10\%	&9\%	&8\% 	&79\%	&95\%	&88\%\\
\PAR 	&33\%	&46\%	&69\%	&41\%	&47\%	&58\%\\
\MSJ 	&23\%	&30\%	&13\%	&45\%	&57\%	&15\%\\
\hline
\end{tabular}
\caption{Increase in net and total time when changing selectivity from 0.1 to 0.9 for queries A1--A3.}
\label{tab:selectivity}
\end{center}
\end{table}

\paragraph*{Selectivity} 
For a conditional relation, we define its 
\emph{selectivity rate} as the percentage of
guard tuples it matches. We tested queries
A1--A3 for selectivity rates 0.1 (high selectivity),
0.3, 0.5, 0.7 and 0.9 (low selectivity).
The increase in net time and total time between 
selectivity rates 0.1 and 0.9 is summarized 
in Table~\ref{tab:selectivity}.
In general, we find that the selectivity has the most influence
on the net times of \PAR and \MSJ,
and on the total times of \SEQ.
Finally, we observe that the filtering characteristics
of \SEQ disappear in the presence of low selectivity data,
causing total times to become comparable to \MSJ
for queries where packing is possible, such as A3.
This can be explained by \MSJ being less sensitive
to selectivity for queries where conditional atoms
share a common join key, making an effective
compression of intermediate data possible through
packing.

\section{Discussion}
\label{sec:concl}
We have shown that naive parallel evaluation of semi-join and (B)SGF queries can greatly reduce the net time of query execution, but, as expected, generally comes at a cost of an increased total time. We presented several methods that aim to reduce the total cost (total time) of parallel MR query plans, while at the same time avoiding a high increase in net time.
We proposed a two-tiered strategy for selecting the optimal parallel MR query plan for an SGF query in terms of total cost. As the general problem was proven to be NP-hard, we devised a two-tiered greedy approach that leverages on the existing technique of grouping MR jobs together based on a cost-model. The greedy approach was shown to be effective for evaluating (B)SGF queries in practice through several experiments using our own implementation called \Gumbo. {For certain classes of queries, our approach makes it even possible to evaluate (B)SGF queries in parallel with a total time similar to that of sequential evaluation.} We also showed that the profuse number of optimizations that are offered in \Gumbo allow it to outperform Pig and Hive in several aspects.

We remark that the techniques introduced in this paper generalize to any
map/reduce framework (as, e.g., \cite{XinRZFSS13}) given an appropriate adaptation of the cost model.

Even though the algorithms in this paper do not directly take skew into account, the presented framework can readily be adapted to do so when information on so-called heavy hitters is available or can 
be computed at the expense of an additional round~(see, e.g., \cite{GatesDN13,piglatin,hive,ramakrishnan2012}). 


 \paragraph*{Acknowledgment}
The third author was supported in part by grant no.\ NTU-ERP-105R89082D and the Ministry of Science and Technology Taiwan under grant no.\ 104-2218-E-002-038.
The computational resources and services used in this work were provided by the VSC (Flemish Supercomputer Center), funded by the Research Foundation - Flanders (FWO) and the Flemish Government -- department EWI. 
We thank Jan Van den Bussche \camera{and Jelle Hellings} for inspiring discussions and Geert Jan Bex for assistance  with cluster setup.

 \balance


\bibliographystyle{abbrv}
\bibliography{references-vldb2015} 


\newpage
\onecolumn
\appendix

\section{Proof of Theorem~2}
\label{appA}
In order to prove Theorem~\ref{theorem2}, we consider a more general problem. 
The same technique can then be used to prove the original theorem.
We define the optimization problem \sgfgopt that takes as input a set $S$
and a cost function $w: 2^{|S|} \rightarrow \mathbb{N}$,
and constructs a partition $\cS = \{S_1, \ldots, S_n\}$ of $S$ 
such that $\sum_{S_i\in \cS} w(S_i)$ is minimized.
The decision version of \sgfgopt, denoted by \sgfg, 
corresponds to deciding, for a given positive integer $k$,
whether there exists a partition $\cS = \{S_1, \ldots, S_n\}$ 
of $S$ such that $\sum_{i}^n w(S_i) = k$. 
We now prove the NP-completeness of \sgfg.

\begin{theorem}
\label{sgfgnpc}
The \sgfg problem is NP-complete.
\end{theorem}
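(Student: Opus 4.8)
The plan is to establish membership in NP first, and then prove NP-hardness by a reduction from a standard NP-hard partition-type problem. Membership in NP is immediate: a partition $\cS = \{S_1,\dots,S_n\}$ of $S$ serves as a certificate; given it, one can evaluate $w$ on each block and check that $\sum_i w(S_i) = k$ in time polynomial in the size of the input (note that $w$ is part of the input, presented so that $w(T)$ is retrievable for each $T \subseteq S$, so each evaluation is a lookup). Hence \sgfg~$\in$~NP.

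For NP-hardness I would reduce from \textsc{Subset Sum} (as the paper's parenthetical remark about ``a reduction from {\sc Subset Sum}'' suggests), or equivalently from the closely related \textsc{Partition} problem. Given a \textsc{Subset Sum} instance --- a multiset of positive integers $a_1,\dots,a_m$ and a target $t$ --- I would build a ground set $S$ and a cost function $w$ so that the blocks of an optimal (or target-cost) partition are forced to encode a choice of a sub-multiset summing to $t$. Concretely, one natural encoding is to take $S = \{1,\dots,m\}$ with each element $i$ ``weighted'' by $a_i$, and define $w$ on a block $B \subseteq S$ so that it is cheap precisely when $\sum_{i \in B} a_i$ equals $t$ (or, in the \textsc{Partition} version, when $\sum_{i\in B} a_i = \tfrac12\sum_j a_j$), and prohibitively expensive otherwise; one then arranges the numbers so that any partition into exactly two blocks of the right weights achieves the threshold $k$, while no other partition can. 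The design of $w$ must be done carefully so that it is computable in polynomial time from the instance and so that the bookkeeping of the target $k$ works out; using a large ``penalty'' constant for badly-weighted blocks and a small base cost per block is the standard trick to make the combinatorics line up.

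The main obstacle, and the step I would spend the most care on, is designing $w$ and choosing $k$ so that the \emph{only} way to hit cost exactly $k$ is via a partition whose blocks have the desired weights --- i.e., ruling out ``cheating'' partitions with many tiny blocks or one giant block. This requires balancing a per-block base cost (which penalizes having too many blocks) against a weight-dependent term (which penalizes blocks of the wrong total weight), and then verifying both directions of the equivalence: a yes-instance of \textsc{Subset Sum} yields a partition of cost exactly $k$, and a partition of cost exactly $k$ (or at most $k$, for the optimization-to-decision bridge) yields a subset summing to $t$. Once $w$ and $k$ are fixed, the correctness argument is a routine case analysis on the block structure, and polynomial-time computability of the reduction is clear. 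Finally, to recover the original Theorem~\ref{theorem2}, I would note that the cost function $w$ arising in \sgfgopt is exactly of the shape realizable by $\cost(\gopt(\cdot))$ over a suitably constructed SGF dependency graph $\cG_Q$, so the same construction, reinterpreted in terms of BSGF subqueries and their shared relations, transfers the hardness to \sgfopt.
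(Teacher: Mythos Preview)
Your high-level plan matches the paper's: membership in NP is immediate, and hardness is by reduction from \textsc{Subset Sum}. The difference is in the reduction itself, and the paper's is considerably simpler than the penalty-based scheme you sketch.

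Rather than making ``wrong-weight'' blocks expensive and then having to rule out cheating partitions with many blocks (the obstacle you correctly identify), the paper adds a single \emph{absorbing} element~$\specialitem$ to the ground set, setting $S = A \cup \{\specialitem\}$, and defines
\[
w(X) = \begin{cases} \gamma & \text{if } \specialitem \in X,\\ \sum_{a \in X} a & \text{otherwise.}\end{cases}
\]
The point is that $w$ is purely additive on blocks not containing~$\specialitem$, so splitting or merging such blocks does not change the total cost at all; the \emph{only} degree of freedom that affects the cost is which elements of $A$ get grouped with~$\specialitem$ (and are thereby absorbed into the constant~$\gamma$). Hence for any partition $\cS$ of $S$, writing $B$ for the set of elements of $A$ \emph{not} in the block containing~$\specialitem$, one has $\sum_i w(S_i) = \gamma + \sum_{b\in B} b$, and the target $k+\gamma$ is hit iff some $B \subseteq A$ sums to~$k$. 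No penalty balancing, no case analysis on block structure.

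Your approach via \textsc{Partition} with a zero/large-penalty $w$ can be made to work, but it is strictly more delicate (you must argue that a cost-$0$ partition forces exactly two blocks, handle the case where some proper sub-multiset already sums to the half-total, etc.), and all of that is sidestepped by the absorbing-element trick. Your final remark about transferring the construction to \sgfopt\ by realizing $w$ via $\cost(\gopt(\cdot))$ on a suitable $\cG_Q$ is exactly what the paper does next.
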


\begin{proof}
The problem is clearly in NP, as a given solution
can be verified in polynomial time.
We prove that this problem is NP-complete by 
a providing a polynomial time reduction from the {\sc Subset Sum} problem, 
which consists of deciding, for given an integer $k$ and a set of positive integers 
$A$, whether there exists a set $B \subseteq A$ such that
$\sum_{b\in B} b = k$. This problem is NP-complete~\cite{GareyJohnson}.

The reduction is as follows. Given an instance of the subset sum problem,
i.e., a set of integers $A$ and an integer $k$, we construct the following instance 
of \sgfg:
The set of items $S$ equals $A \cup \{\specialitem\}$, 
where $\specialitem$ is an element not present in $A$,
and the cost function $w:2^{S}\to\mathbb{S}$ is defined as
\begin{equation}
\label{eq:wdef}
    w(X) = 
	\begin{cases}
               \gamma			& \mbox{if}\ \specialitem \in X\\
               \sum_{a \in X} a	& \mbox{otherwise}\\

	\end{cases}
\end{equation}
Here, $\gamma$ is an arbitrary constant. Intuitively,
$w$ takes the sum of all items in a given set, except when
the set contains the special item $\specialitem$. In that case, the value
will be a fixed number $\gamma$.

In the remainder of this proof, we show that there exists 
a $B \subseteq A$ such that $\sum_{b\in B} b = k$
iff  there exists a partition $\cS = \{S_1, \ldots, S_n\}$ 
of $S$ such that $\sum_{i}^n w(S_i) = k + \gamma$. 

Suppose there exists a partition $\cS = \{S_1, \ldots, S_n\}$ 
of $S$ such that $\sum_{i}^n w(S_i) = k + \gamma$.
Now, select the element of the partition that 
contains $\specialitem$, say $S_{\specialitem}$, and remove it from the partition
to obtain a partition $\cS'$ for $S \setminus S_{\specialitem}$. 
Now, let $B = \bigcup \cS'$ and note that $B \subseteq A$.
Clearly, the sum of the elements in $B$ equals $k + \gamma - \gamma = k$,
as the cost of the removed element equals $\gamma$ by Equation~\eqref{eq:wdef}.
Hence, $B$ is a $k$-cost subset of $A$.

Suppose there exists  a $B \subseteq A$ 
such that $\sum_{b\in B} b = k$.
Consider the partition $\cS = \{B, C\}$ of $S$,
where $C = (A \setminus B) \cup \{\specialitem\}$.
As $w(B) = k$ and $w(C) = \gamma$, 
we have a total cost of $k+\gamma$.
\end{proof}

We conclude this section by providing a proof for Theorem~\ref{theorem2}.
Consider the decision version of \sgfopt, which we'll denote by \sgf:
for a DAG of BSGF queries $\cG_Q$ and a positive integer $k$
determine whether there exists a multiway topological sort $\cF$ of $\cG_Q$ 
with $\cost(\cF) = k$.

\begin{theorem}
The \sgf problem is NP-complete.
\end{theorem}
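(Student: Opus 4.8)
The plan is to follow the blueprint of Theorem~\ref{sgfgnpc}: first place \sgf in NP, then give a polynomial‑time reduction from {\sc Subset Sum} using gadget queries whose grouping cost realizes (a scaled copy of) the cost function $w$ of Equation~\eqref{eq:wdef}. For membership in NP, take as certificate a multiway topological sort $\cF=(F_1,\dots,F_k)$ of $\cG_Q$; it has size polynomial in $\cG_Q$, checking the two defining conditions of a multiway topological sort is trivially polynomial, and the value $\cost(\cF)=\sum_{i=1}^k\cost(\gopt(F_i))$ is computable in polynomial time, since $\gopt(F_i)$ is produced by the deterministic, polynomial‑time greedy \gbsgf and every $\cost(\cdot)$ occurring in Section~\ref{sec:eval} is a closed‑form arithmetic expression in the (binary) relation sizes. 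Hence $\sgf\in$~NP.

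For NP‑hardness, reduce from {\sc Subset Sum}. Given $A=\{a_1,\dots,a_m\}$ and target $k$, build a dependency graph $\cG_Q$ with \emph{no edges} on $m+1$ nodes, so that the second condition for a multiway topological sort is vacuous and the sorts of $\cG_Q$ are exactly the ordered partitions of its node set, with $\cost(\cF)$ depending only on the underlying unordered partition. For each $a_i$ install a BSGF query $Q_i$ over its own private relation symbols, sized so that $\gbsgf$ performs no nontrivial grouping inside a single block and so that placing any collection of the $Q_i$'s on one level gives no saving beyond the fixed start‑up term $\cost_{\mystart}$ (disjoint relations are read the same number of times whether the jobs are merged or not, and choosing all per‑task data below the buffer limits $\mbuflim,\rbuflim$ makes every logarithmic merge term equal to a precomputable constant). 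Also install one special query $Q_0$ that shares relations with all of the $Q_i$'s in such a way that any job grouping $Q_0$ with an arbitrary subset $B\subseteq\{Q_1,\dots,Q_m\}$ has a cost equal to a fixed constant $\gamma$ independent of $B$ — so $Q_0$ plays the role of the item $\specialitem$ whose presence fixes a block's cost at $\gamma$. After a global scaling that makes the additive arithmetic dominate the fixed overheads, the base cost of $Q_i$ encodes $a_i$, and a multiway topological sort of $\cG_Q$ meets a suitably chosen threshold iff the $Q_i$'s \emph{not} grouped with $Q_0$ have values summing to $\mathrm{sum}(A)-k$, i.e.\ iff some $B\subseteq A$ satisfies $\sum_{b\in B}b=k$. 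This is precisely the translation carried out in the proof of Theorem~\ref{sgfgnpc}, now with $w$ instantiated by honest cost‑model values.

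The main obstacle is exactly this last calibration. In \sgfg the cost function $w$ is handed to us, whereas here the only degrees of freedom are the relation symbols, their sizes, and the shape of the Boolean conditions, while the cost model is fixed and carries superlinear (logarithmic merge) terms. The technical heart of the argument is therefore designing the gadget relations so that (i) the value queries behave additively up to a uniform, precomputable correction, and (ii) the special query $Q_0$ genuinely masks whatever block it joins; scaling the $a_i$ by a large enough factor and keeping all per‑task data under $\mbuflim$ and $\rbuflim$ collapses the cost model to its dominant linear I/O terms, at which point the arithmetic of the reduction goes through exactly as in Theorem~\ref{sgfgnpc}, establishing that \sgf is NP‑complete.
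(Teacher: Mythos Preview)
Your architecture matches the paper's proof exactly: NP membership via the polynomial-time greedy \gbsgf, and hardness from {\sc Subset Sum} encoded on an edgeless dependency graph with one distinguished query playing the role of the item $\specialitem$ whose block cost is a fixed $\gamma$. The only real divergence is in how you neutralize the cost model. You assume the I/O constants are immutable and argue by scaling the $a_i$ and keeping per-task data below $\mbuflim,\rbuflim$ so that the cost becomes ``dominantly linear up to precomputable corrections.'' The paper instead simply \emph{chooses} the constants---all set to $0$ except $\Hrc=1$ (and implicitly $\cost_{\mystart}=0$)---takes every guard relation $R_i,R^{\specialitem}$ to be empty, and populates each $S_i$ only with tuples whose second field fails the constant in the conditional atom. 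Under those choices there is no map output, no validate output, and no eval input, so every term in the cost formulas of Section~\ref{subsec:cost-model} except the HDFS read collapses to~$0$ exactly; hence $\cost(\gopt(F))$ is literally the total size of the relations mentioned in~$F$, and the cost function coincides with the $w$ of Equation~\eqref{eq:wdef} on the nose with $\gamma=\sum_i a_i$. From there the argument is Theorem~\ref{sgfgnpc} verbatim. Your scaling route is plausible in spirit, but note that the decision problem asks for an \emph{exact} value $\cost(\cF)=k$, so ``dominant linear terms'' must be turned into genuine equalities (and the number of start-up overheads depends on how many blocks the sort has, which you would also have to account for); the paper's zeroing trick buys all of this for free.
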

\begin{proof}
We can verify a solution for this problem in the following way:
given a multiway topological sort $\cF$ and a positive integer $k$, we can calculate its total cost using
Equation~\eqref{eq:cost-sgf-gopt}, i.e.,
\[
\cost(\cF) \ = \
\sum_{i=1}^k \ \cost(\gopt(F_i)),
\]
and check whether it equals $k$.
As the greedy algorithm \gbsgf for $\gopt$ runs in polynomial time, the total cost can be calculated in
polynomial time as well and hence the problem is in NP.

We now provide a polynomial time reduction from {\sc Subset Sum} to \sgf.
Given an instance of the {\sc Subset Sum} problem consisting of a set of positive integers
$A = \{a_1,\ldots, a_n\}$ and a positive integer $k$, we construct the following instance for \sgf.
Let $R_1, \ldots, R_n, R^{\specialitem}$ be a set of binary relations containing no tuples, a set of relations $S_1, \ldots, S_n$ be a set of binary relations where $|S_i| = a_i$, each tuple has a size of 1MB and every second field has value 0. 
Also, consider a set of BSGF queries 
$F = \{f_1, \ldots, f_n, f^{\specialitem}\}$, where $f_i$ equals $R_i(x_i,y_i) \semijoin S_i(x_i,1)$
and
\[
f^{\specialitem} = R^{\specialitem}(x,1) \semijoin R_1(x_1,y_1) \land \ldots \land R_n(x_n,y_n) \land S_1(x_1,1) \land \ldots \land S_n(x_1,1).
\]
Notice that there are no dependencies between the queries.
Finally, for the Hadoop system, we choose all I/O constants equal to 0, except $\Hrc=1$.

Using Equation~\eqref{eq:cost-bsgf} and our cost model, we obtain that the cost of calculating each individual BSGF query $f_i$ equals $a_i$:
\begin{align*}
\cost(\gopt(\{f_i\}))
		&= \cost(\eval(R_i,\varphi_{C_i})) + \cost(\{f_i\})\\
		&= 0 + \cost_{\map}(|f_i|,|f_i|) + \cost_{\red}(|f_i|,|f_i|)\\
		&= 0 + \Hrc \cdot a_i + 0\\
		&= a_i.
\end{align*}
Note that the \eval job has cost 0, as no output is provided by the \validate job.

We also find that the cost of two jobs equals their individual sums, regardless whether or not they are grouped in the query plan provided by \gopt:
\begin{align*}
\cost(\gopt(\{f_i,f_j\})) &= a_i + a_j.
\end{align*}

Finally, we note that \gopt will always group $f_i$ with $f^{\specialitem}$ as all relations of $f_i$ appear in $f^{\specialitem}$. This leads to a cost of:
\begin{align*}
\cost(\gopt(\{f_i,f^{\specialitem}\})) &= \gamma,
\end{align*}
where $\gamma = \sum_{i=1}^n a_i$.

Let $\cG_F$ be the dependency graph that has nodes $F$ and no edges (as there are no query dependencies).
Now, there exists a multiway topological sort for $\cG_F$ of cost $k + \gamma$ iff there exists
a $B \subseteq A$ such that $\sum_{b\in B} b = k$.

Observe that the cost function behaves completely analogous to the one used in the proof of Theorem~\ref{sgfgnpc}.
Hence, the same reasoning can now be applied to complete the proof.
\end{proof}

%

\section{Hadoop Settings \& Cost Model Constants}
\label{appB}
Table~\ref{tab:hadoopsettings} lists the settings for the Hadoop cluster that was utilized for our experiments.
Table~\ref{tab:realcostconstants} shows the values that are used for the constants in the revised MapReduce cost model. The values were obtained using benchmark tests on system used for the experiments.

\begin{table}[h]
\begin{center}
    \begin{tabular}{|l|l|}
    \hline
    Hadoop Setting & Value \\\hline\hline
    \texttt{io.file.buffer.size}                          & 131072   \\\hline
    \texttt{dfs.replication}                              & 3          \\\hline
    \texttt{mapred.child.java.opts}                       & -Xmx1024m \\\hline
    \texttt{mapreduce.map.memory.mb}                      & 1280       \\\hline
    \texttt{mapreduce.reduce.memory.mb}                   & 1280      \\\hline
    \texttt{mapreduce.task.io.sort.mb}                    & 512       \\\hline
    \texttt{mapreduce.reduce.merge.inmem.threshold}       & 0          \\\hline
    \texttt{mapreduce.reduce.input.buffer.percent}        & 0.5        \\\hline
    \texttt{mapreduce.job.reduce.slowstart.completedmaps} & 1          \\\hline
    \texttt{mapred.map.tasks.speculative.execution}         & false         \\\hline
    \texttt{mapred.reduce.tasks.speculative.execution}         & false         \\\hline
    \texttt{yarn.nodemanager.resource.memory-mb}          & 49152      \\\hline
    \texttt{yarn.scheduler.minimum-allocation-mb}         & 4096       \\\hline
    \texttt{yarn.scheduler.maximum-allocation-mb}         & 49152      \\\hline
    \texttt{yarn.nodemanager.resource.cpu-vcores}         & 10         \\\hline
    \end{tabular}
    \end{center}
    \caption{Hadoop settings used in the experiments.}
\label{tab:hadoopsettings}
\end{table}

\begin{table}[h]
\centering
\begin{tabular}{|c|l|r|}
\hline
Constant &Description&Value\\\hline\hline
$\Lrc$    & local disk read cost (per MB)&0.03 \\\hline	
$\Lwc$    & local disk write cost (per MB)&0.085 \\\hline
$\Hrc$    & hdfs read cost (per MB)&0.15\\\hline		
$\Hwc$    & hdfs write cost (per MB)&0.25\\\hline	
$\Trc$    & transfer cost  (per MB)&0.017\\\hline	\hline
$D$ & external sort merge factor&10\\\hline
$\mbuflim$ & map task buffer limit (in MB)& 409MB\\\hline
$\rbuflim$ & reduce task buffer limit (in MB)& 512MB\\\hline	
\end{tabular}
\caption{Description of constants used in the cost model.
}
\label{tab:realcostconstants}
\end{table}

\ignore{
\frank{I think the next paragraph is too long and should be shortened or
moved to an appendix. Especially since it turns out that we will use the
cost model of the paper \cite{nykiel2010,wang2013}, we do not need to define the cost model explicitly in the paper but could defer the discussion to an appendix.}
\smallskip\noindent
{\bf Job execution pipeline.} To understand the cost model
that we introduce in Section~\ref{sec:mr-cost-model}, we highlight 
some important aspects \mr jobs. The
execution of an \mr job corresponds to a hard-coded query execution
pipeline~\cite{Dittrich:2010:HMY:1920841.1920908}, consisting of the
following phases:

\begin{compactenum}[1.]
\item \emph{Reading input.} \mr operates on data that is distributed
  across a cluster of machines (typically by means of
  \hdfs~\cite{White:2009:HDG:1717298}). The data is horizontally partitioned into a  number of blocks, with each block replicated on multiple
  machines in the cluster for redundancy. A \emph{split} is a
  collection of blocks. During the map phase, a map \emph{task} is
  generated for each split, and the \mr framework takes care to
  execute as many map tasks as possible in parallel. Each map task
  reads data only from its corresponding split.

\item \emph{Map.} Each map task reads the tuples in its split and, for
  each such tuple $t$, the result $\mu(t)$ is appended to an output
  buffer of size $\mapoutbuf$. When this buffer is full, its contents
  is sorted on the key and written to an output file on local disk (not
  HDFS!). The buffer is said to \emph{spill} to disk and the generated file
  is called a \emph{spill} of the map task.

\item \emph{Combine.} Optionally, an \mr job can specify a
  \emph{combine} function to already partially reduce pairs with the
  same key, for instance, to decrease the map output size.

\item \emph{Map-side Merge.} When a map task finishes, all its
  spills are merged into a single sorted file using multiway merging. This
  may require several passes, depending on the number of spills, and
  on the merge factor $\mapsortfactor$ (which determines the maximum
  number of spills to merge in a pass).

\item \emph{Shuffle.} When the map phase has finished, the reduce phase is initiated.\footnote{In
  Hadoop, it is possible the allow the reduce phase to start when a {specified percentage of mappers have finished}. 
  This is done for performance reasons.} A user-specified number of reduce tasks is started. Each
reduce task is responsible for processing a single group
$\kv{k_i}{V_i}$. 
\jonny{no, each reduce task processes a set of these. Each call to the reduce function consists of one of these}
Again, the framework takes care of executing reduce
tasks in parallel. The first step of a reduce task involves
transferring the map output to the correct reducers, based on the
keys; this is called the \emph{shuffle phase}.  Each reduce task
requests its data from each mapper and transfers it over the
network. The reduce task stores the data on disk\footnote{In practice,
  an in-memory buffer is used that only spills to disk when it is
  full.} in files of size $\redinbuf$.

\item \emph{Merge.}  After the transfer completes,
the different map outputs are merged into a single sorted file
using a multiway merge using a buffer of size $\redsortbuf$
  and a factor of $\redsortfactor$.

\item \emph{Reduce.} The resulting sorted data is then supplied to
  the reduce function $\rho$. 
\item \emph{Output.} Finally, the output produced by $\rho$ is written
  back to the distributed filesystem (HDFS).
\end{compactenum}

In general, steps (1-4) are called the \emph{map phase}, and steps (6-8) the
\emph{reduce phase}.
} 

\ignore{


To compare different query evaluation plans in Section~\ref{?}, we
introduce here a cost model that can be used to estimate the running
time of a single \mr job.

\stijn{provide here the reference to the
  paper on which you base yourself, and state what we change an why }
Our cost model is based on the cost model proposed by 
Nykiel~et~al.~\cite{nykiel2010} and Wang~et~al.~\cite{wang2013}, 
but make the following modifications. 
\new{
We obtain the following cost functions, one for each part of the MR pipeline:
\begin{align*}
\textit{readcost} &= \is(\all) \cdot \hdfsr\\
\textit{mapcost}_R &= sort \cdot \frac{\ints(R)}{\mapoutbuf}\\
\textit{mmergelevels}_R &= log_{\mapsortfactor}\frac{\ints(R)}{\mapoutbuf}\\
\textit{mergecost}_R &= l\cdot \ints(R) \cdot (\lcr + \lcw) \\
\textit{shufflecost} &= \ints(\all) \cdot \trfr + m \cdot r \cdot \textit{penalty}\\
\textit{rmergerounds} &= \log_{\redsortfactor}\frac{\frac{\ints(\all)}{\rednum}}{\redsortbuf}\\
\textit{reducecost} &= (1 + l) \cdot \frac{\ints(\all)}{\rednum} \cdot (\lcr + \lcw)\\
\textit{outputcost} &= \outs(\all) \cdot \hdfsw\\
\end{align*}
The original cost model~\cite{wang2013} estimates the input read cost,
the map-side merge cost, the shuffle cost and the reduce cost
based on the size of both the map input data and the map output data.
This model assumes that this I/O cost occurring in the different stages is the dominant cost.
We propose the following changes to obtain a more fine-grained cost model.
First, we consider a separate map cost for each input relation, as each of 
these can account for a different portion of the map output. 
Next, we refine the transfer cost by taking the number of mappers and reducers into account. 
The reason for this is apparent, as we observe that adding more mappers 
and reducers increases communication overhead during the shuffle phase, 
as more reducers need to fetch their input data from more mappers.
Finally, we note that the cost of writing the final output may only be ignored
for jobs that contain data that is meaningful to the user.}

\begin{table}[t]
\begin{tabular}{r|l}
\hline
$\lcr$ & the cost for reading one page from local disk \\\hline
$\lcw$ & the cost for writing one page to local disk \\\hline
$\hdfsr$ & the cost for reading one page from hdfs \\\hline
$\hdfsw$ & the cost for writing one page to hdfs \\\hline
$\is(R)$ & the input size of relation $R$ \\\hline
$\ints(R)$ & the map output size of relation $R$ \\\hline
$\outs(R)$ & the reduce output size of relation $R$ \\\hline
\end{tabular}
\caption{Different cost components for the cost model.}
\label{tab:costs}
\end{table}

For a given relation $R$, we consider the following cost calculation
for a generic MapReduce job that operates on $R$ only.  The map input
cost is the cost for reading the entire input from HDFS. When we
assume that map tasks run on nodes where the actual data resides, the
total map input cost is $\lcr \cdot \is(R)$. \jonny{this is not entirely true,
as the data is not read directly from disk, but through sockets. It is faster than on another node though...}
The actual map
calculation cost is dropped, as this is dominated by the other
subcosts. The sort cost requires sorting a full sort buffer and
writing it to disk. Here, the local write cost is the dominating
factor; this yields a cost of $\lcw \cdot \ints(R)$.  \stijn{Do the
  experiments confirm that the writing is the dominating factor?}
  \jonny{Yes, a significant amount of time is spent in a sort-and-spill function.
  The combination of sorting and writing to disk takes up a lot of time. Moreover, I think the sorting cost should be taken into account}. 
  Next, a
merge-sort is performed to sort the data on the map side. The number
of merge-sort levels $l$ is equal
to $$\log_{\mapsortfactor}\frac{\ints(R)}{\mapoutbuf}.$$ For each
level, the data needs to be read from and written to local disk,
amounting to a merge cost of $$l\cdot \ints(R) \cdot (\lcr + \lcw).$$ 

Hence, the total cost for the map phase of an \mr job processing
relations $R_1,\dots,R_n$ amounts to (across all map tasks):
\[
\sum_{i=1}^n \lcr \cdot \is(R_i) + \lcw \cdot \ints(R_i) + l \cdot \frac{\ints(R_i)}{\mapoutbuf} \cdot \ints(R_i) \cdot (\lcr + \lcw).
\]
Here, as opposed to other cost models, the cost for each relation is
calculated separately, as their replication factor could differ. This
allows for a non-uniform contribution of different map tasks to the
total cost, which corresponds to reality.

Let $\ints(\all)$ represent the size
of all intermediate outputs for $R_i$ together, $\ints(\all) \coloneqq
\ints(R_1) + \dots + \ints(R_n)$.  For the shuffle phase, the cost of
transferring the data from the mappers to the reducers is $\ints(\all)
\cdot \trfr$.  Assuming that the shuffle phase distributes the
intermediate results uniformly, each reduce task gets a portion of
${\frac{\ints(All)}{\rednum}}$ of data. For one reduce task, the merge
phase will then take up the following number of rounds: \stijn{Jonny:
  question: in principle the files that the reduce tasks gets from
  each map task is already sorted. It hence does not need to sort them
  any more, right? Why then divide by $\redsortbuf?$}
  \jonny{the total amount of data received may exceed the buffer size,
  causing spilling to occur. This leads to multiple spill files that need to be merged (read and written repeatedly).
  }
\[
\log_{\redsortfactor}\frac{\frac{\ints(All)}{\rednum}}{\redsortbuf}.
\]
For each of these rounds, the total data of the reduce task needs to be read from and written to disk\footnote{In Hadoop, it is possible to enable an option that skips writing to disk after the last merge round. Instead, the output of this round is fed directly to the reduce function. When this option is enabled, the number of levels should be decreased by 1, as one read and write step are skipped.}. Also, before the merging starts, all data is written to disk, and before the reduce function starts, all data is read from disk, leading to a total cost of:
\[
(1 + \log_{\redsortfactor}\frac{\frac{\ints(\all)}{\rednum}}{\redsortbuf}) \cdot \frac{\ints(\all)}{\rednum} \cdot (\lcr + \lcw).
\]
Here, $\frac{\sum\ints(\all)}{\rednum}$ is the average amount of data
that is processed by one reduce task.  Next, the reduce function is
run. Again, we drop the cost of this calculation as it is dominated by
the other costs in this phase.  Finally, the reduce results are
written back to HDFS, amounting to a cost of: $\hdfsw \cdot
\outs(R)$. In order to perform this calculation, we need to be able to
estimate the total amount of output and the relative contribution of
each relation.

\stijn{Jonny: I think that the cost of output can be ignored, since
  the output job of one is the input of the next job, no?}
  \jonny{I partially agree: this is true for our round 2 output, but not for our round 1 output,
  as it consists purely of intermediate data and is of no interest to the user.}

} 

\end{document}